\newtheorem{theorem}{Theorem}
\newtheorem{lemma}{Lemma}
\newtheorem{corollary}{Corollary}
\theoremstyle{definition}
\newtheorem{mydef}{Definition}
\theoremstyle{definition}
\theoremstyle{definition}
\newtheorem{example}{Example}
\theoremstyle{definition}
\newtheorem{const}{Code Construction}
\theoremstyle{definition}
\renewcommand*\env@matrix[1][c]{\hskip -\arraycolsep
  \let\@ifnextchar\new@ifnextchar
  \array{*\c@MaxMatrixCols #1}}
\newcommand{\Fb}{\mathds{F}}
\newcommand{\xb}{{\bf{x}}}
\newcommand{\eb}{{\bf{e}}}
\newcommand{\zb}{{\bf{z}}}
\newcommand{\yb}{{\bf{y}}}
\newcommand{\Hb}{{\bf{H}}}
\newcommand{\Pb}{{\bf{P}}}
\newcommand{\Ib}{{\bf{I}}}
\newcommand{\Cb}{{\bf{C}}}
\newcommand{\cb}{{\bf{c}}}
\newcommand{\ub}{{\bf{u}}}
\newcommand{\ab}{{\bf{a}}}
\newcommand{\ei}{\epsilon}
\newcommand{\csr}{\mathscr{C}}
\newcommand{\Ab}{{\textbf{A}}}
\newcommand{\Sb}{{\textbf{S}}}
\newcommand{\Kb}{{\textbf{K}}}
\newcommand{\Mb}{{\textbf{M}}}
\newcommand{\ical}{\mathcal{I}}
\newcommand{\bcal}{\mathcal{B}}
\newcommand{\acal}{\mathcal{A}}
\newcommand{\scal}{\mathcal{S}}
\newcommand{\wtm}{\mathrm{wt}}
\newcommand{\optm}{\mathrm{opt}}
\newcommand{\rcovm}{r_{\mathrm{cov}}}
\newcommand{\xcal}{\mathcal{X}}
\newcommand{\cd}{\mathrm{FIC}}
\newcommand{\thm}{\mathrm{th}}
\title{Codes for Updating Linear Functions over\\ Small Fields}
\author{
\IEEEauthorblockN{Suman Ghosh and Lakshmi Natarajan}

\thanks{The authors are with the Department of Electrical Engineering, Indian Institute of Technology Hyderabad, Sangareddy 502\,285, India (email: \{ee16resch11006,\,lakshminatarajan\}@iith.ac.in).
}
}
\begin{document}

\maketitle

\begin{abstract}
We consider a point-to-point communication scenario where the receiver intends to maintain a specific linear function of a message vector over a finite field. When the value of the message vector changes, which is  modelled as a sparse update, the transmitter broadcasts a coded version of the modified message while the receiver uses this codeword and the current value of the linear function to update its contents. It is assumed that the transmitter has access to only the modified message and is unaware of the exact difference vector between the original and modified messages. 
Under the assumption that the difference vector is sparse and that its Hamming weight is at the most a known constant, the objective is to design a linear code with as small a codelength as possible that allows successful update of the linear function at the receiver. 
This problem is motivated by applications to distributed data storage systems.
Recently, Prakash and M\'{e}dard derived a lower bound on the codelength, which is independent of the size of the underlying finite field, and provided constructions that achieve this bound if the size of the finite field is sufficiently large.
However, this requirement on the field size can be prohibitive for even moderate values of the system parameters.
In this paper, we provide a field-size aware analysis of the function update problem, including a tighter lower bound on the codelength, and design codes that trade-off the codelength for a smaller field size requirement.
We also show that the problem of designing codes for updating linear functions is related to functional index coding or generalized index coding.
We first characterize the family of function update problems where linear coding can provide reduction in codelength compared to a naive transmission scheme.
We then provide field-size dependent bounds on the optimal codelength, and construct coding schemes based on error correcting codes and subspace codes when the receiver maintains linear functions of striped message vector.
These codes provide a trade-off between the codelength and the size of the operating finite field,
and whenever the achieved codelengths equal those reported by Prakash and M\'{e}dard the requirements on the size of the finite field are matched as well.
Finally, for any given function update problem, we construct an equivalent functional index coding or generalized index coding problem such that any linear coding scheme is valid for the function update problem if and only if it is valid for the constructed functional index coding problem.
\end{abstract}

\section{Introduction}
We consider a point-to-point communication scenario as shown in Fig.~\ref{fig:1} where the receiver maintains a linear function $\Ab\xb$ of a message vector $\xb$. 
The message $\xb$ is an $n$-length column vector over a finite field $\Fb_q$, where $q$ is any prime power, and $\Ab$ is an $m \times n$ matrix over $\Fb_q$ with $m \leq n$ and rank$(\Ab)=m$. 
Suppose the value of the message vector is updated to $\xb+\eb$, where $\eb$ represents a sparse update to the message, i.e., we assume that $\wtm(\eb) \leq \ei$ where $\wtm$ denotes the Hamming weight of a vector and $\ei$ is a known constant.
In other words at the most $\ei$ entries of the original message $\xb$ are updated to new values.
We assume that the transmitter has access to the updated message $\xb+\eb$, but is unaware of the original message $\xb$ or the sparse update $\eb$.
Note that the message update is modelled here as substitutions only and not as insertions or deletions.
The objective is to design a linear encoder that uses an $l \times n$ matrix $\Hb$ to generate the codeword $\cb=\Hb(\xb+\eb)$, with as small a codelength $l$ as possible, such that the receiver can decode $\Ab(\xb+\eb)$ using the transmitted codeword $\cb$ and the older version of its content $\Ab\xb$.

\begin{figure}[ht!]
\centering
\includegraphics[width=4.3in]{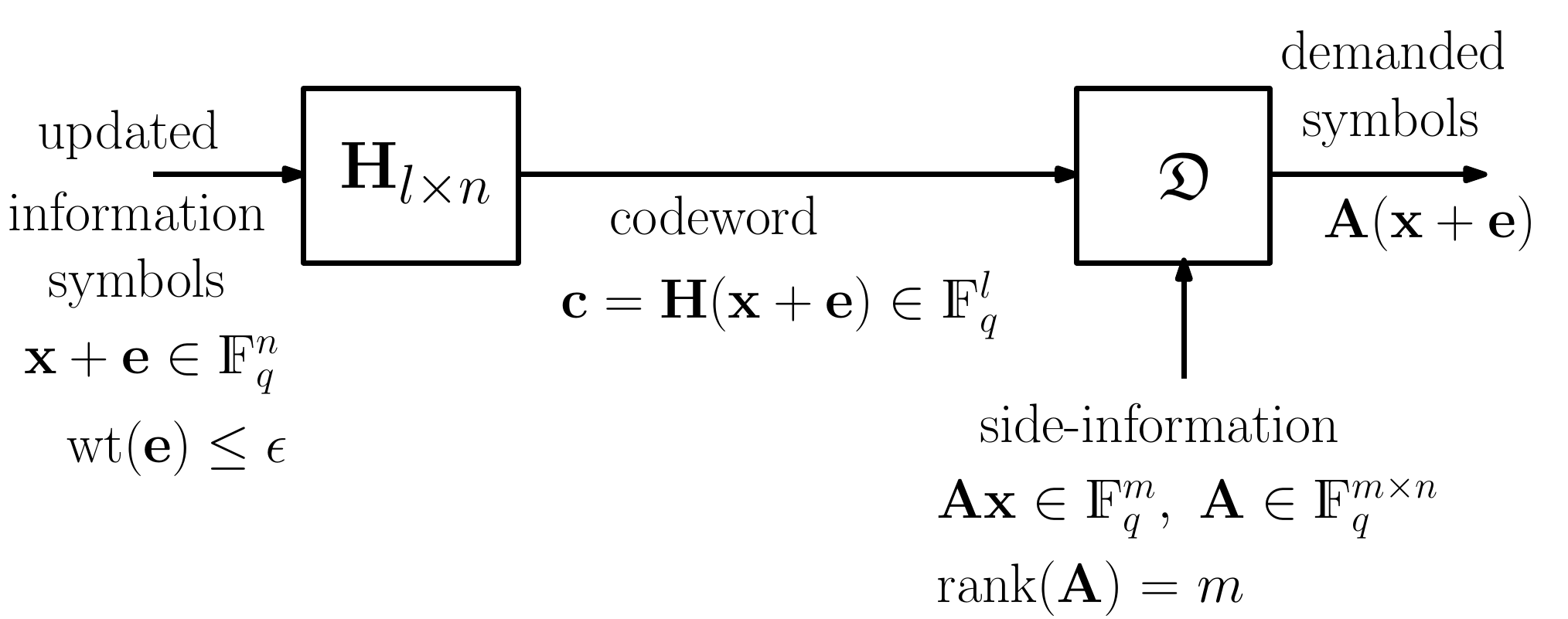}
\caption{System model for the point-point function update problem.}
\label{fig:1} 
\end{figure}

The problem is motivated by distributed storage systems (DSS) where information is stored in linearly coded form across a number of nodes to provide resilience against storage node failures~\cite{PM_2018}. In the scenario where multiple users can simultaneously edit a single file stored in a DSS, it is possible that a user who wishes to apply his update $\xb+\eb$ is unaware of the current version of the message $\xb$ stored in the DSS, for instance when another user has recently edited this file. 
Letting the user first learn the version $\xb$ stored in the DSS, and then apply his update will incur additional communication cost. 
As an alternative, if it is known that the update vector $\eb$ is sparse, it is possible to design schemes that do not require the knowledge of the value of $\eb$ at the transmitter~\cite{PM_2018,NSR_2014,NSRSR_update_2014}.

The function update problem was considered in~\cite{NSR_2014,NSRSR_update_2014} for DSS's for updating one of the storage nodes with the help of the other nodes in the system.
Note that each node in a DSS stores a linear function of the message.
A node can become \emph{stale} in such systems, for instance if the node goes offline while the message and the corresponding linear functions stored in the other nodes undergo an update. Once it is back online, the stale node connects to the other nodes in the distributed storage system to update its own linear function, and the stale data already stored in this node acts as side information. 
The authors of~\cite{NSR_2014,NSRSR_update_2014} design both the code for distributed storage and the code for function update to minimize the amount of data downloaded by the stale node to update its contents. This is unlike the problem statement considered in~\cite{PM_2018} as well as this paper, where it is  assumed that an arbitrary matrix $\Ab$ is given and a code for updating the function $\Ab\xb$ is to be designed.

The authors in~\cite{PM_2018} also consider a broadcast scenario where a codeword is broadcast to multiple nodes in order to update the (different) linear functions stored in each of the nodes.
Problems related to updating linear functions have been considered in~\cite{MPMY_2018,AC_2017,WC_2018}.
In~\cite{MPMY_2018}, codes for updating linear functions are used in cache-aided networks to reduce the cost of multicasting a sequence of correlated data frames. 
The problem of efficiently storing multiple versions of a file in a DSS while ensuring a property called \emph{consistency} is considered in~\cite{AC_2017,WC_2018}.



In the study of the point-to-point function update problem given in \cite{NSR_2014,NSRSR_update_2014,PM_2018} the authors derive the following field-size independent lower bound on the codelength 
\begin{equation*}
l \geq \min(m,2\ei). 
\end{equation*} 
Note that, if $m \leq 2\ei$, the lower bound on the codelength $l \geq m$ can be trivially achieved by transmitting $\Ab(\xb+\eb)$.
Hence, we will always assume that $m > 2\ei$.
The results in \cite{PM_2018} show that codelength $l=2\ei$ is achievable using maximally recoverable subcodes of $\csr_A$, the subspace spanned by the rows of $\Ab$, which are guaranteed to exist if the field size $q \geq 2\ei n^{2\ei}$. 
Note that this requirement imposed on the field size can be large even for moderate values of $\ei$ and $n$. 
The authors of~\cite{PM_2018} also consider the special case where the matrix $\Ab$ is \emph{striped}, i.e.,
\begin{equation*}
\Ab=\Ib_{a} \otimes \Cb = 
\begin{bmatrix}
\Cb & \pmb{0} & \dots & \pmb{0}\\
\pmb{0} & \Cb & \dots & \pmb{0}\\
\vdots & \vdots & \ddots & \vdots\\
\pmb{0} & \pmb{0} & \dots & \Cb
\end{bmatrix}
\end{equation*}
where $\Ib_{a}$ is the $a \times a$ identity matrix, $\mathbf{C} \in \Fb_q^{t \times K}$  and $\otimes$ denotes the Kronecker product. 
Note that $m=at$ and $n=aK$.
This structure frequently arises in distributed storage systems where the $n$-length data $\xb$ is partitioned into $a$ subvectors $\xb_1,\dots,\xb_a$, each of length $K$, each subvector is encoded independently by multiplying with $\mathbf{C}$, and all the encoded vectors are stored in a single storage node, see Examples~1--3 of~\cite{PM_2018}.
In~\cite[Section~IV]{PM_2018}, a code is constructed for the case $t=1$ that achieves the codelength $l=2\ei$ using an $[m,m-2\ei]$ MDS code, which is guaranteed to exist if the field size $q \geq m$. 
In Remark~4 of \cite{PM_2018} the authors consider a modified system model for the function update problem which we show in Section~\ref{compare} of this paper to be equivalent to the case where $\Ab$ is striped with the number of stripes $a=t$. 
Construction~1 and Remark~4 of \cite{PM_2018} provide a code construction for this modified system model, and hence for the case $a=t$, that achieves codelength of $2t\ei$ over any field.

In this paper we provide a field-size aware characterization of the point-to-point function update problem. In particular, we provide bounds on the achievable codelength that take into account the effect of the field size and we provide constructions that trade-off the codelength for a smaller field size requirement. 
This is unlike the point-to-point results in \cite{PM_2018} which provide constructions only for the case $l=2\ei$ but assume that the field size $q$ is sufficiently large.
To the best of our knowledge, no prior analysis of this problem as a function of the field size $q$ is available in the literature except \cite{PM_2018} 
which assumes that the field size $q$ is large enough for a maximally recoverable code to exist.

We characterize the family of point-to-point function update problems where linear coding scheme is useful to save at least one transmission, i.e., $l \leq m-1$ is achievable (Theorem~\ref{thm3}, Section~\ref{cond}). 
This characterization is analyzed in terms of the covering radius of $\csr^\perp_A$, the dual of the code $\csr_A$, in Section~\ref{covering_rad}. We provide a lower bound (Theorem~\ref{thm4}, Section~\ref{low_bound}) and an upper bound (Theorem~\ref{up}, Section~\ref{code_const}) on optimal codelength based on linear error correcting codes. Similar to \cite{PM_2018} we also provide code constructions when $\Ab$ is striped (Section~\ref{t_1},\ref{t_ge_1}) but our focus is on the general case where $t \geq 1$ and $a \geq 1$. 
For the case when $t=1$ we provide a construction (Section~\ref{t_1}) which achieves the optimal codelength for the respective operating field size $q$, for any prime power $q \geq 2$. For the special case $q \geq m$ this code construction achieves codelength $2\ei$ and this matches the achieved codelength in Construction~2 of \cite{PM_2018} for $t=1$ which also requires $q \geq m$. Section~\ref{t_ge_1} provides code constructions for $t \geq 1$ using subspace codes and error correcting codes over field extensions. 
All these code constructions yield a trade-off between the chosen field size and achieved codelength where operating over a smaller field size results in a larger codelength than operating over a larger field size (for instance, see Example~\ref{exmp3}). 
When restricted to the special case $a=t$ our construction provides a valid coding scheme for the modified function update problem mentioned in \cite[Remark~4]{PM_2018} that matches codelength $2t\ei$ over any field $\Fb_q$ reported in \cite{PM_2018} (Section~\ref{compare}). 
The performance comparison of the constructed codes are discussed in Section~\ref{code_compare}. 
Finally, we show that the point-to-point function update problem is equivalent a \emph{functional index coding} or a \emph{generalized index coding} problem~\cite{DSS_2014,LDH_2015,GR_2016}.
Given a point-to-point function update problem we construct a functional index coding problem (Algorithm~1, Section~\ref{FIC_const}) such that a coding scheme is valid for the function update problem if and only if it is valid for the constructed functional index coding problem (Theorem~\ref{fu_cd_H}, Section~\ref{FIC_const}). 
This paper starts with describing the system model and providing relevant preliminary results in Section~\ref{sys}.

\emph{Notation}: Matrices and column vectors are denoted by bold uppercase and lowercase letters, respectively. For any positive integer $n$, the symbol $[n]$ denotes the set $\{1,\dots,n\}$. The Hamming weight of a vector $\xb$ is denoted as $\mathrm{wt}(\xb)$. The symbol $\Fb_q$ denotes the finite field of size $q$ and $\Fb_q^n$ denotes a column vector of $n$ elements over $\Fb_q$ where $q$ is a prime power. The $n \times n$ identity matrix is denoted as ${\bf I}_n$.  
 
\section{System Model and Preliminaries} \label{sys}
We consider a noiseless communication scenario with single transmitter and single receiver. The transmitter knows a column vector $\xb$ of $n$ information symbols where each information symbol is an element over finite field $\Fb_q$. The receiver stores the coded message $\Ab\xb \in \Fb_q^m$ where $\Ab \in \Fb_q^{m \times n}$ ($m \leq n$) and rank($\Ab$) = $m$. Now suppose the information symbol vector $\xb$ is updated to $\xb +\eb$ where $\eb$ is the update vector which is also a column vector of length $n$ over $\Fb_q$ with $\wtm(\eb)\leq \ei$, where $\wtm$ denotes the Hamming weight of a vector. The objective is to generate a codeword $\cb=(c_1,c_2,\dots,c_l)^T$ with codelength $l$ as small as possible such that the receiver can update its content to $\Ab(\xb+\eb)$ using the transmitted codeword $\cb$ and the older version of its content $\Ab\xb$. We assume the transmitter doesn't know about original information symbol vector $\xb$ or update vector $\eb$ but only knows the updated information symbol vector $(\xb+\eb)$. The problem of designing coding scheme to update the coded data $\Ab\xb$ available at the receiver to $\Ab(\xb+\eb)$ with $\wtm(\eb) \leq \ei$ will be called as $(\Ab,\ei)$ \textit{function update problem}.

\begin{mydef}
A valid encoding function of codelength $l$ for the $(\Ab,\ei)$ function update problem over the field $\Fb_q$ is a function
\begin{equation*}
\mathfrak{E}: \Fb_q^n~\longrightarrow~\Fb_q^l
\end{equation*}
such that there exists a decoding function $\mathfrak{D}: \Fb_q^l \times \Fb_q^m~\longrightarrow~\Fb_q^m$ satisfying the
following property: $\mathfrak{D}(\mathfrak{E}(\xb+\eb),\Ab\xb)=\Ab(\xb+\eb)$ for every $\xb \in \Fb_q^n$ and $\eb \in \Fb_q^n$ with $\wtm(\eb) \leq \ei$.
\end{mydef}

The objective of the code construction is to design a pair $(\mathfrak{E},\mathfrak{D})$ of encoding and decoding functions that minimizes the codelength $l$ and to calculate the optimal codelength over $\Fb_q$ which is the minimum codelength among all valid coding schemes.

A coding scheme $(\mathfrak{E},\mathfrak{D})$ is said to be linear if the encoding function is an $\Fb_q$-linear transformation. For a linear coding scheme, the codeword $\cb=\Hb(\xb+\eb)$, where $\Hb \in \Fb_q^{l \times n}$. The matrix $\Hb$ is the encoder matrix of the linear coding scheme. The minimum codelength among all valid linear coding schemes for the $(\Ab,\ei)$ function update problem over the field $\Fb_q$ will be denoted as $l_{q,\optm}$.

The trivial coding scheme that transmits the updated coded information symbols $\Ab(\xb+\eb)$  i.e., $\cb=\Ab(\xb+\eb)$ is a valid coding scheme with codelength $m$ since the receiver can directly update its content using $\cb$. We refer to this trivial coding scheme as \textit{naive scheme} where $\Hb=\Ab$. Thus, we have the following trivial upper bound on the optimum linear codelength
\begin{equation} \label{trivial_up}
l_{q,\optm} \leq m.
\end{equation}
In \cite{PM_2018} the authors provided a necessary and sufficient condition for a matrix $\Hb$ to be a valid encoder matrix for $(\Ab,\ei)$ function update problem. In Theorem 2 of \cite{PM_2018} the proof is given only for necessary condition for a matrix $\Hb$ to be a valid encoder matrix for $(\Ab,\ei)$ function update problem. For the sake of completeness here we first prove that the criterion~1 in \cite[Theorem~2]{PM_2018} is a necessary and sufficient condition for a matrix $\Hb$ to be a valid encoder matrix for $(\Ab,\ei)$ function update problem and then state the relevant results which will be helpful to derive other results of this paper. Let $\csr_{A}$ and $\csr_{H}$ denote the linear codes generated by the rows of $\Ab$ and $\Hb$ respectively.  Also let $\csr=\csr_A \cap \csr_H$ and let $\Pb$ be a generator matrix of $\csr$.
\begin{theorem} [Theorem 2, \cite{PM_2018}]\label{thm1}
A matrix $\Hb \in \Fb_q^{l \times n}$ is a valid encoder matrix for the $(\Ab,\ei)$ function update problem if and only if $\Pb\yb \neq \pmb{0}$ for any $\yb \in \Fb_q^n$ with $\wtm(\yb) \leq 2\ei$ and $\Ab\yb \neq \pmb{0}$.
\end{theorem}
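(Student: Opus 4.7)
The plan is to prove both directions by first reformulating the validity of $\Hb$ as an unambiguous-decoding condition and then translating that condition into a linear-algebraic statement about the subspaces $\ker \Ab$ and $\ker \Hb$, whose orthogonal complements are $\csr_A$ and $\csr_H$.

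First, I would observe that $\Hb$ is a valid encoder matrix if and only if, for every two pairs $(\xb_1,\eb_1)$ and $(\xb_2,\eb_2)$ with $\wtm(\eb_1), \wtm(\eb_2) \leq \ei$, the equalities $\Hb(\xb_1+\eb_1) = \Hb(\xb_2+\eb_2)$ and $\Ab\xb_1 = \Ab\xb_2$ force $\Ab(\xb_1+\eb_1) = \Ab(\xb_2+\eb_2)$; otherwise the decoder would be required to output two distinct values on the same input. Setting $\ub = \xb_1 - \xb_2$ and $\zb = \eb_1 - \eb_2$, and using the fact that any $\zb$ with $\wtm(\zb) \leq 2\ei$ can be written as a difference of two vectors of weight at most $\ei$ by splitting its support, this becomes: for every $\ub \in \ker\Ab$ and every $\zb$ with $\wtm(\zb) \leq 2\ei$ satisfying $\ub + \zb \in \ker\Hb$, one has $\Ab\zb = 0$.

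Next I would rephrase this purely in terms of a single vector $\yb$ of weight at most $2\ei$. Setting $\yb = \zb$, the auxiliary $\ub \in \ker\Ab$ exists with $\ub+\yb \in \ker\Hb$ if and only if $\yb$ admits a decomposition $\yb = \ub_A + \ub_H$ with $\Ab\ub_A = 0$ and $\Hb\ub_H = 0$, i.e., $\yb \in \ker\Ab + \ker\Hb$. Hence validity of $\Hb$ is equivalent to: whenever $\wtm(\yb) \leq 2\ei$ and $\yb \in \ker\Ab + \ker\Hb$, we have $\Ab\yb = 0$.

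The final step invokes the standard duality identity for linear codes, $(\csr_A \cap \csr_H)^\perp = \csr_A^\perp + \csr_H^\perp = \ker\Ab + \ker\Hb$. Since $\Pb$ is a generator matrix of $\csr_A \cap \csr_H$, the set of $\yb \in \Fb_q^n$ with $\Pb\yb = \pmb{0}$ coincides with $(\csr_A\cap\csr_H)^\perp = \ker\Ab + \ker\Hb$. Substituting this into the reformulation and taking the contrapositive gives exactly the statement that $\Pb\yb \neq \pmb{0}$ for every $\yb$ with $\wtm(\yb) \leq 2\ei$ and $\Ab\yb \neq \pmb{0}$. The only delicate parts are the weight-splitting reduction and the clean back-and-forth between the operational formulation and the kernel-sum formulation; both are short but must be set up carefully. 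Everything else is the standard orthogonality dictionary between a matrix and its row code.
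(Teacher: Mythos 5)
Your proposal is correct and follows essentially the same route as the paper's proof: both reduce validity to the unambiguous-decoding condition, pass to difference vectors, and use the duality identity $(\csr_A\cap\csr_H)^\perp=\csr_A^\perp+\csr_H^\perp$ (your $\ker\Ab+\ker\Hb$) to land on the condition $\Pb\yb\neq\pmb{0}$. If anything, you are slightly more careful than the paper in explicitly noting that every vector of weight at most $2\ei$ arises as a difference of two vectors of weight at most $\ei$, a step the paper uses only implicitly.
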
 
\begin{proof}
A matrix $\Hb \in \Fb_q^{l \times n}$ is a valid encoder matrix for the $(\Ab,\ei)$ function update problem if and only if  the receiver can uniquely determine $\Ab(\xb+\eb)$ from the received codeword $\Hb(\xb+\eb)$ and the side information
$\Ab\xb$. Hence for two pairs of information symbol vectors and update vectors $(\xb,\eb)$ and $(\xb',\eb')$ such that the coded information symbol vectors available at the receiver are identical i.e., $\Ab\xb=\Ab\xb'$ but updated coded information symbol vectors are distinct i.e., $\Ab(\xb+\eb) \neq \Ab(\xb'+\eb')$ then the transmitted codeword $\Hb(\xb+\eb)$ must be distinct from $\Hb(\xb'+\eb')$ to distinguish the two different updated coded information symbol vectors. Equivalently, the condition $\Hb(\xb+\eb) \neq \Hb(\xb'+\eb')$ should hold for every choice of $\xb,\xb',\eb,\eb' \in \Fb_q^n$ with $\wtm(\eb), \wtm(\eb') \leq \ei$ satisfying $\Ab\xb=\Ab\xb'$ and $\Ab(\xb+\eb) \neq \Ab(\xb'+\eb')$. Therefore $\Hb$ is a valid encoder matrix if and only if
\begin{equation*}
\Hb(\xb-\xb') \neq \Hb(\eb'-\eb)
\end{equation*}
for all $\xb,\xb' \in \Fb_q^n$ such that $\Ab\xb=\Ab\xb'$ and $\Ab(\xb-\xb') \neq \Ab(\eb'-\eb)$. Now denoting $\zb=\xb-\xb'$ and $\yb=\eb'-\eb$ we have
\begin{equation} \label{equ1}
\Hb\zb \neq \Hb\yb
\end{equation}
for all $\zb,\yb \in \Fb_q^n$ and $\wtm(\yb)=\wtm(\eb'-\eb) \leq 2\ei$ such that $\Ab\zb=\pmb{0}$ and $\Ab\zb \neq \Ab\yb$. Now reformulating the condition given in (\ref{equ1}) we obtain $\Hb(\zb-\yb) \neq \pmb{0}$ for all $\zb,\yb \in \Fb_q^n$ that satisfy $\wtm(\yb)\leq 2\ei$, $\Ab\zb=\pmb{0}$ and $\Ab\yb \neq \pmb{0}$. Therefore $\Hb$ is a valid encoder matrix if and only if for all $\zb,\yb \in \Fb_q^n$ and $\wtm(\yb)\leq 2\ei$ if $\zb \in \csr_A^{\perp}$ and $\yb \notin \csr_A^{\perp}$ then $(\yb-\zb) \notin \csr_H^{\perp}$. Hence $\Hb$ is a valid encoder matrix if and only if for all $\yb \in \Fb_q^n$ with $\wtm(\yb)\leq 2\ei$ if  $\yb \notin \csr_A^{\perp}$ then $\yb \notin \csr_A^{\perp}+\csr_H^{\perp}$. Now using the fact that $\csr_A^{\perp}+\csr_H^{\perp}=(\csr_A \cap \csr_H)^{\perp}=\csr^{\perp}$ we deduce that $\Hb$ is a valid encoder matrix if and only if for all $\yb \in \Fb_q^n$ with $\wtm(\yb)\leq 2\ei$ such that $\yb \notin \csr_A^{\perp}$ also satisfies $\yb \notin \csr^{\perp}$. Hence the statement of the theorem follows.
\end{proof}
\begin{lemma} [Remark 2, \cite{PM_2018}] \label{lmm1}
Let $\Hb \in \Fb_q^{l \times n}$ be a valid encoder matrix for the $(\Ab,\ei)$ function update problem. Let $\Pb$ be a generator matrix of the code $\csr=\csr_A \cap \csr_H$. Then $\Pb$ is also a valid encoder matrix for the $(\Ab,\ei)$ function update problem. 
\end{lemma}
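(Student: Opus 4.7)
The plan is to apply Theorem~\ref{thm1} twice: once to translate the validity of $\Hb$ into a statement about the intersection code $\csr$, and once to reduce the validity of $\Pb$ to the same statement.

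First, since $\Hb$ is a valid encoder matrix for the $(\Ab,\ei)$ function update problem, Theorem~\ref{thm1} applied to $\Hb$ guarantees that $\Pb\yb \neq \pmb{0}$ for every $\yb \in \Fb_q^n$ satisfying $\wtm(\yb) \leq 2\ei$ and $\Ab\yb \neq \pmb{0}$, where $\Pb$ is the generator matrix of $\csr = \csr_A \cap \csr_H$ fixed in the statement.

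Next, I would show that the generator matrix of $\csr_A \cap \csr_P$ is $\Pb$ itself (up to row-equivalence), where $\csr_P$ denotes the row space of $\Pb$. By construction $\csr_P = \csr$, and since $\csr = \csr_A \cap \csr_H$ is a subcode of $\csr_A$, we have $\csr_P \subseteq \csr_A$. Consequently $\csr_A \cap \csr_P = \csr_P = \csr$, so $\Pb$ is a generator matrix of $\csr_A \cap \csr_P$.

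Finally, I would invoke Theorem~\ref{thm1} once more, this time with $\Pb$ playing the role of the encoder matrix. By the previous step, the condition to check is precisely that $\Pb\yb \neq \pmb{0}$ for every $\yb \in \Fb_q^n$ with $\wtm(\yb) \leq 2\ei$ and $\Ab\yb \neq \pmb{0}$, which was already established from the validity of $\Hb$. Therefore $\Pb$ is a valid encoder matrix, completing the proof. There is no substantive obstacle here; the only care needed is to recognise that the intersection code appearing in Theorem~\ref{thm1} stays the same when $\Hb$ is replaced by $\Pb$, thanks to the containment $\csr \subseteq \csr_A$.
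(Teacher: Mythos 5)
Your argument is correct. Note that the paper itself gives no proof of this lemma (it is imported verbatim from Remark~2 of the cited reference), so there is nothing to diverge from; your double application of Theorem~\ref{thm1}, hinging on the observation that $\csr_P\subseteq\csr_A$ forces $\csr_A\cap\csr_P=\csr$ so that the validity criterion for $\Pb$ is literally the same condition already guaranteed by the validity of $\Hb$, is the natural and complete derivation. The one point of care you flag --- that the criterion $\Pb\yb\neq\pmb{0}$ is unaffected by which generator matrix of $\csr$ is chosen, since any two differ by left multiplication by an invertible matrix --- is exactly the right one.
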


If we consider a valid encoder matrix $\Hb' \in \Fb_q^{l' \times n}$ such that $\csr_{H'} \nsubseteq \csr_A$, then we can find another valid encoder matrix $\Hb \in \Fb_q^{l \times n}$ as the generator matrix of the code $\csr_A \cap \csr_{H'}$. Since $\csr_H$ is a subcode of $\csr_A \cap \csr_{H'}$, we have $l > l'$. Therefore  the encoder matrix $\Hb'$ has sub-optimal codelength.  So from now we only consider encoder matrices $\Hb$ such that $\csr_H \subseteq \csr_A$. Since we assume $\csr_H \subseteq \csr_A$ we can write $\Hb=\Sb\Ab$ for some matrix $\Sb \in \Fb_q^{l \times m}$.

Now using $\Pb=\Hb$ we restate Theorem~\ref{thm1} as follows. A matrix $\Hb \in \Fb_q^{l \times n}$ such that $\csr_H \subseteq \csr_A$ is a valid encoder matrix for the $(\Ab,\ei)$ function update problem if and only if for any $\yb \in \Fb_q^n$ with $\wtm(\yb) \leq 2\ei$ and $\Ab\yb \neq \pmb{0}$ satisfies $\Hb\yb \neq \pmb{0}$. We define the collection $\ical(\Ab,\ei)$ as the set of all vectors $\yb \in \Fb_q^n$ with $\wtm(\yb) \leq 2\ei$ such that $\Ab\yb \neq \pmb{0}$ i.e.,
\begin{equation} \label{ical_fu}
\ical(\Ab,\ei)=\{\yb \in \Fb_q^n~|~\Ab\yb \neq \pmb{0},~0 < \wtm(\yb) \leq 2\ei\}.
\end{equation}
\begin{theorem} \label{thm2}
A matrix $\Hb=\Sb\Ab$ for some matrix $\Sb \in \Fb_q^{l \times m}$ is a valid encoder matrix for the $(\Ab,\ei)$ function update problem if and only if 
\begin{equation*}
\Hb\yb \neq \pmb{0},~~~\forall \yb \in \ical(\Ab,\ei).
\end{equation*}
\end{theorem}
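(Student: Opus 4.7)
The plan is to deduce Theorem~\ref{thm2} as a direct specialization of Theorem~\ref{thm1} under the standing assumption that $\csr_H \subseteq \csr_A$, which holds because $\Hb = \Sb \Ab$ means every row of $\Hb$ is an $\Fb_q$-linear combination of rows of $\Ab$.

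First I would observe that, under this containment, the intersection code simplifies dramatically: $\csr = \csr_A \cap \csr_H = \csr_H$. Consequently any generator matrix $\Pb$ of $\csr$ is simply a generator matrix of $\csr_H$, and in particular we may use $\Pb = \Hb$ itself (even if $\Hb$ is not of full row rank, the row space is still $\csr_H$, which is all that matters for applying Theorem~\ref{thm1} and Lemma~\ref{lmm1}).

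Next I would translate the criterion of Theorem~\ref{thm1} using this $\Pb$. For any $\yb \in \Fb_q^n$, the condition $\Pb \yb = \pmb{0}$ is equivalent to $\yb \in \csr_H^\perp$, which in turn is equivalent to $\Hb \yb = \pmb{0}$ (since orthogonality to $\csr_H$ is the same as orthogonality to every row of $\Hb$, regardless of whether those rows are linearly independent). Therefore the clause ``$\Pb \yb \neq \pmb{0}$ for every $\yb$ with $\wtm(\yb) \leq 2\ei$ and $\Ab \yb \neq \pmb{0}$'' from Theorem~\ref{thm1} becomes ``$\Hb \yb \neq \pmb{0}$ for every $\yb$ with $\wtm(\yb) \leq 2\ei$ and $\Ab \yb \neq \pmb{0}$.'' Comparing with the definition of $\ical(\Ab,\ei)$ in~\eqref{ical_fu}, this is exactly the statement $\Hb \yb \neq \pmb{0}$ for all $\yb \in \ical(\Ab,\ei)$; the extra stipulation $\wtm(\yb) > 0$ in the definition of $\ical(\Ab,\ei)$ is automatic, since $\Ab \yb \neq \pmb{0}$ forces $\yb \neq \pmb{0}$.

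There is essentially no obstacle here — the result is a bookkeeping reformulation of Theorem~\ref{thm1}. The only point that warrants explicit justification is the step $\Pb \yb = \pmb{0} \iff \Hb \yb = \pmb{0}$, which I would state cleanly using the equality of row spaces; otherwise both directions (necessity and sufficiency) follow immediately by substitution into Theorem~\ref{thm1}.
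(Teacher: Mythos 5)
Your proposal is correct and follows essentially the same route as the paper, which (after restricting to $\csr_H \subseteq \csr_A$) simply sets $\Pb = \Hb$ in Theorem~\ref{thm1} and restates the criterion with $\ical(\Ab,\ei)$. Your extra care about $\Hb$ possibly not having full row rank (arguing via equality of row spaces that $\Pb\yb = \pmb{0} \iff \Hb\yb = \pmb{0}$) is a small refinement the paper glosses over, but it is not a different argument.
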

 Now we define the collection $\ical_{\text{FU}}(\Ab,\ei)$ as the set of all non-zero linear combinations of $2\ei$ or fewer columns of $\Ab$ over $\Fb_q$ i.e.,
\begin{equation*}
\ical_{\text{FU}}(\Ab,\ei) = \{\Ab\yb~|~0 < \wtm(\yb) \leq 2\ei\}\backslash \{\pmb{0}\}=\{\Ab\yb~|~\yb \in \ical(\Ab,\ei)\}.
\end{equation*} 
Note that $|\ical_{\text{FU}}| \leq q^m-1$ since $\pmb{0} \notin \ical_{\text{FU}}$.
\begin{corollary} \label{corr1}
$\Hb=\Sb\Ab$ is a valid encoder matrix for the $(\Ab,\ei)$ function update problem if and only if
\begin{equation*}
\Sb\zb \neq \pmb{0},~~~\forall \zb \in \ical_{\text{FU}}(\Ab,\ei).
\end{equation*} 
\end{corollary}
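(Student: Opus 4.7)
The plan is to deduce Corollary~\ref{corr1} as a direct reformulation of Theorem~\ref{thm2} using the definition of $\ical_{\text{FU}}(\Ab,\ei)$. Since the hypothesis is that $\Hb = \Sb\Ab$, Theorem~\ref{thm2} already characterizes validity in terms of $\Hb\yb \neq \pmb{0}$ for every $\yb \in \ical(\Ab,\ei)$, so the work is essentially a substitution.

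First, I would write $\Hb\yb = \Sb\Ab\yb = \Sb(\Ab\yb)$ for every $\yb \in \Fb_q^n$. Applying Theorem~\ref{thm2}, $\Hb$ is a valid encoder matrix if and only if $\Sb(\Ab\yb) \neq \pmb{0}$ for all $\yb \in \ical(\Ab,\ei)$. Then, recalling the definition
\begin{equation*}
\ical_{\text{FU}}(\Ab,\ei) = \{\Ab\yb \,:\, \yb \in \ical(\Ab,\ei)\},
\end{equation*}
the universal quantifier over $\yb \in \ical(\Ab,\ei)$ on the expression $\Sb(\Ab\yb)$ becomes a universal quantifier over $\zb \in \ical_{\text{FU}}(\Ab,\ei)$ on $\Sb\zb$, which yields the stated condition.

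The only point that needs a brief justification is that this quantifier change is an equivalence in both directions: (i) if $\Sb\zb \neq \pmb{0}$ for every $\zb \in \ical_{\text{FU}}(\Ab,\ei)$, then since each $\Ab\yb$ with $\yb \in \ical(\Ab,\ei)$ lies in $\ical_{\text{FU}}(\Ab,\ei)$ by definition, we get $\Sb(\Ab\yb) \neq \pmb{0}$; and (ii) conversely, every element of $\ical_{\text{FU}}(\Ab,\ei)$ is by construction of the form $\Ab\yb$ for some $\yb \in \ical(\Ab,\ei)$, so the hypothesis $\Sb(\Ab\yb) \neq \pmb{0}$ forces $\Sb\zb \neq \pmb{0}$ on all of $\ical_{\text{FU}}(\Ab,\ei)$. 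There is no real obstacle here — the corollary is essentially a restatement, and the proof is just a few lines chaining Theorem~\ref{thm2} with the definition of $\ical_{\text{FU}}(\Ab,\ei)$.
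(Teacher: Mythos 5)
Your proof is correct and matches the paper's (implicit) reasoning: the corollary is indeed just Theorem~\ref{thm2} combined with the substitution $\Hb\yb = \Sb(\Ab\yb)$ and the definition $\ical_{\text{FU}}(\Ab,\ei) = \{\Ab\yb : \yb \in \ical(\Ab,\ei)\}$, and your two-direction check of the quantifier change is exactly the right justification. The paper states the corollary without proof precisely because the argument is this immediate.
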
 
\section{Necessary and Sufficient Condition for $l_{q,\optm} < m$}  \label{cond}
In this section we will characterize the family of point-to-point function update problems where linear coding is useful to save at least one transmission compared to the naive scheme i.e., $l_{q,\optm} <m$. First we will derive some preliminary results which will be helpful to derive the main result of this section.
\begin{lemma} \label{lmm2}
The collection $\ical_{\text{FU}}(\Ab,\ei)$ is closed under non-zero scalar multiplication.
\end{lemma}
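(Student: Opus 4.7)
The plan is to unwind the definitions and show that scaling a witness vector $\yb$ by a nonzero field element produces another witness, then push this scaling through the linear map $\Ab$. This is essentially a one-step verification, but I will lay out the chain of reasoning carefully.

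First I would fix an arbitrary element $\zb \in \ical_{\text{FU}}(\Ab,\ei)$ and an arbitrary nonzero scalar $\alpha \in \Fb_q \setminus \{0\}$, and seek to show $\alpha \zb \in \ical_{\text{FU}}(\Ab,\ei)$. By the definition of $\ical_{\text{FU}}(\Ab,\ei)$, there exists $\yb \in \ical(\Ab,\ei)$ with $\zb = \Ab\yb$; equivalently, $\yb \in \Fb_q^n$ satisfies $0 < \wtm(\yb) \leq 2\ei$ and $\Ab\yb \neq \pmb{0}$. The candidate preimage for $\alpha \zb$ is the vector $\alpha \yb$, and the remaining task is to check that $\alpha \yb$ lies in $\ical(\Ab,\ei)$.

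The two properties to verify are straightforward. Since $\alpha \neq 0$ and scalar multiplication by a nonzero element of $\Fb_q$ preserves the support of a vector, we get $\wtm(\alpha\yb) = \wtm(\yb)$, which keeps the weight in the range $(0, 2\ei]$. For the second property, linearity of $\Ab$ and $\alpha \neq 0$ together with $\Ab\yb \neq \pmb{0}$ give $\Ab(\alpha\yb) = \alpha \Ab\yb \neq \pmb{0}$. Hence $\alpha\yb \in \ical(\Ab,\ei)$, and consequently $\alpha\zb = \Ab(\alpha\yb) \in \ical_{\text{FU}}(\Ab,\ei)$, completing the argument.

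There is no real obstacle here; the only subtlety to flag is the use of $\alpha \neq 0$ in two places: it is needed both to guarantee that the Hamming weight is preserved (so the weight bound carries over) and to guarantee that $\alpha \Ab\yb$ is still nonzero (so the image stays out of $\{\pmb{0}\}$). Both uses rely on $\Fb_q$ being a field, where nonzero elements are invertible and hence act faithfully on nonzero vectors.
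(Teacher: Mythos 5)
Your proof is correct and follows exactly the same route as the paper's: take a weight-bounded preimage $\yb$ of $\zb$, scale it to $\alpha\yb$, and observe that both the weight bound and the nonvanishing of the image under $\Ab$ are preserved because $\alpha \neq 0$. No gaps.
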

\begin{proof}
Suppose $\zb \in \ical_{\text{FU}}$. There exists a $\yb \in \Fb_q^n$ with $0< \wtm(\yb) \leq 2\ei$ such that $\zb=\Ab\yb$. For any $\alpha \in \Fb_q^{\ast}$, $\alpha\zb=\alpha\Ab\yb=\Ab(\alpha\yb)=\Ab\yb'$, where $\yb'=\alpha\yb$. Now as $0 < \wtm(\yb) \leq 2\ei$, it follows that $0 < \wtm(\yb') \leq 2\ei$. Again $\Ab\yb'=\Ab(\alpha\yb)=\alpha \Ab\yb \neq \pmb{0}$ as $\Ab\yb \in \ical_{\text{FU}}$ and $\alpha \neq 0$. Therefore for any $\alpha \in \Fb_q^{\ast}$, $\alpha\zb \in \ical_{\text{FU}}$. Hence the lemma holds.
\end{proof} 
\subsection{A coding scheme for a family of $(\Ab,\ei)$ function update problems} \label{coding_scheme}
Consider any $(\Ab,\ei)$ function update problem where there exists a non-zero $\ub \in \Fb_q^m$ such that $\ub \notin \ical_{\text{FU}}$. Let $\csr_u$ be the subspace of $\Fb_q^m$ generated by $\ub$. Therefore dim$(\csr_u)=1$. Note that dim$(\csr_u^{\perp})=m-1$. Let $\Sb \in \Fb_q^{(m-1) \times m}$ be a generator matrix of the code $\csr_u^{\perp}$. The matrix $\Sb$ is a parity check matrix of the code $\csr_u$.

\begin{lemma} \label{lmm3}
The matrix $\Sb$ satisfies $\Sb\zb \neq \pmb{0}$ for all $\zb \in \ical_{\text{FU}}(\Ab,\ei)$.
\end{lemma}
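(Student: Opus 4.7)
The plan is to argue by contrapositive: I will show that any $\zb$ with $\Sb\zb=\pmb{0}$ cannot lie in $\ical_{\text{FU}}(\Ab,\ei)$. The key observation is that, by construction, $\Sb$ is a parity-check matrix of the one-dimensional code $\csr_u=\text{span}(\ub)$. Hence $\Sb\zb=\pmb{0}$ if and only if $\zb \in \csr_u$, i.e., $\zb=\alpha\ub$ for some $\alpha \in \Fb_q$.

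Now suppose, for contradiction, that some $\zb \in \ical_{\text{FU}}(\Ab,\ei)$ satisfies $\Sb\zb=\pmb{0}$. By the previous paragraph, $\zb=\alpha\ub$ for some scalar $\alpha$. Since $\pmb{0} \notin \ical_{\text{FU}}(\Ab,\ei)$ we must have $\alpha \neq 0$, so $\alpha \in \Fb_q^{\ast}$. At this point Lemma~\ref{lmm2} does the work: applying closure of $\ical_{\text{FU}}$ under non-zero scalar multiplication to $\zb$ with the scalar $\alpha^{-1}$ yields $\alpha^{-1}\zb = \ub \in \ical_{\text{FU}}(\Ab,\ei)$, which directly contradicts the defining hypothesis that $\ub \notin \ical_{\text{FU}}$.

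I do not anticipate any real obstacle here — the proof is essentially a bookkeeping argument. The only thing to be careful about is ensuring that the chosen $\zb$ is genuinely non-zero (so that $\alpha^{-1}$ exists), which is guaranteed by the definition of $\ical_{\text{FU}}(\Ab,\ei)$ in~\eqref{ical_fu} via the constraint $\wtm(\yb)>0$, and to cleanly invoke Lemma~\ref{lmm2} rather than re-prove closure. The argument works over any field $\Fb_q$ and uses no property of $\Ab$ beyond what is already packaged inside $\ical_{\text{FU}}$.
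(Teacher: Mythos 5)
Your proof is correct and follows essentially the same route as the paper's: assume $\Sb\zb=\pmb{0}$ for some $\zb\in\ical_{\text{FU}}$, deduce $\zb=\alpha\ub$ with $\alpha\in\Fb_q^{\ast}$, and invoke the closure of $\ical_{\text{FU}}$ under non-zero scalar multiplication (Lemma~\ref{lmm2}) to conclude $\ub\in\ical_{\text{FU}}$, a contradiction. Your explicit justification that $\alpha\neq 0$ (because $\pmb{0}\notin\ical_{\text{FU}}$) is a small point the paper glosses over, and you correctly cite Lemma~\ref{lmm2} where the paper's text contains a self-referential typo.
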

\begin{proof}
Proof by contradiction. Let there exist a $\zb \in \ical_{\text{FU}}$ such that $\Sb\zb =\pmb{0}$. This implies that $\zb \in \csr_u$. Therefore there exists an $\alpha \in \Fb_q^{\ast}$ such that $\zb=\alpha\ub$. Now as $\alpha \in \Fb_q^{\ast}$, $\alpha^{-1}$ exists and hence $\ub=\alpha^{-1}\zb$. Now as $\zb \in \ical_{\text{FU}}$ and $\ical_{\text{FU}}$ is closed under non-zero scalar multiplication (using Lemma~\ref{lmm3}), $\ub \in \ical_{\text{FU}}$ which is a contradiction. Hence the lemma holds. 
\end{proof}
Now using Corollary~\ref{corr1}, we obtain a valid encoder matrix for the $(\Ab,\ei)$ function update problem over $\Fb_q$ as $\Hb=\Sb\Ab$ with codelength $l=m-1$, whenever there exists a non-zero vector in $\Fb_q^m \backslash \ical_{\text{FU}}$. We do not claim that this coding scheme yields the optimal codelength $l_{q,\optm}$.
\begin{example} \label{exmp1}
Consider the $(\Ab,1)$ function update problem over binary field $\Fb_2$ where $m=5$, $n=8$, $\epsilon=1$ and the matrix $\Ab \in \Fb_2^{5 \times 8}$ is given by
\begin{equation*}
\Ab=
\begin{bmatrix}
1 & 1 & 0 & 1 & 0 & 0 & 1 & 1\\
0 & 0 & 1 & 0 & 0 & 1 & 0 & 0\\
1 & 0 & 0 & 1 & 0 & 0 & 1 & 1\\
0 & 0 & 1 & 0 & 1 & 1 & 1 & 1\\
1 & 0 & 1 & 1 & 0 & 0 & 1 & 1
\end{bmatrix}.
\end{equation*} 
Note that rank$(\Ab)=5$ over $\Fb_2$. The non-zero vector $\ub=[0~1~1~0~0] \in \Fb_2^5$ satisfies $\ub \notin \ical_{\text{FU}}(\Ab,1)$. The parity check matrix of the code $\csr_u$, generated by $\ub$ is given by
\begin{equation*}
\Sb=
\begin{bmatrix}
1 & 0 & 0 & 1 & 1\\
0 & 1 & 1 & 0 & 0\\
0 & 0 & 0 & 1 & 1\\
1 & 1 & 1 & 1 & 0\\
\end{bmatrix}.
\end{equation*} 
Therefore we obtain a valid encoder matrix $\Hb \in \Fb_2^{4 \times 8}$ with codelength $l=4$ for the function update problem as 
\begin{equation*}
\Hb= \Sb\Ab=
\begin{bmatrix}
0 & 1 & 0 & 0 & 1 & 1 & 1 & 1\\
1 & 0 & 1 & 1 & 0 & 1 & 1 & 1\\
1 & 0 & 0 & 1 & 1 & 1 & 0 & 0\\
0 & 1 & 0 & 0 & 1 & 0 & 1 & 1
\end{bmatrix}.
\end{equation*}   
\end{example}
Now we derive a necessary and sufficient condition for any $(\Ab,\ei)$ function update problem to save at least one transmission using linear coding scheme compared to the naive scheme.
\begin{theorem}  \label{thm3}
For an $(\Ab,\ei)$ function update problem, $l_{q,\optm}=m$ if and only if $|\ical_{\text{FU}}(\Ab,\ei)|=q^m-1$.
\end{theorem}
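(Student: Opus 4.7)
The statement is an ``if and only if'' equating the naive scheme being optimal with the set $\ical_{\text{FU}}(\Ab,\ei)$ being as large as possible (it is always a subset of $\Fb_q^m\setminus\{\pmb{0}\}$, so $|\ical_{\text{FU}}|\leq q^m-1$, and equality means $\ical_{\text{FU}}=\Fb_q^m\setminus\{\pmb{0}\}$). The plan is to handle each direction by directly applying Corollary~\ref{corr1}: one direction converts the set-theoretic condition into a statement about the null space of $\Sb$, and the other direction converts the existence of an omitted vector $\ub$ into a concrete short encoder through the construction of Section~\ref{coding_scheme}.

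For the $(\Leftarrow)$ direction, I would assume $|\ical_{\text{FU}}(\Ab,\ei)|=q^m-1$, so that $\ical_{\text{FU}}=\Fb_q^m\setminus\{\pmb{0}\}$. Consider any valid encoder matrix $\Hb$; without loss of generality (by the discussion preceding Theorem~\ref{thm2}) we may write $\Hb=\Sb\Ab$ for some $\Sb\in\Fb_q^{l\times m}$. By Corollary~\ref{corr1}, validity requires $\Sb\zb\neq\pmb{0}$ for every $\zb\in\ical_{\text{FU}}$, i.e., for every nonzero $\zb\in\Fb_q^m$. Hence $\Sb$ has trivial right null space, so $\mathrm{rank}(\Sb)=m$, forcing $l\geq m$. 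Combined with the trivial upper bound~\eqref{trivial_up} this gives $l_{q,\optm}=m$.

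For the $(\Rightarrow)$ direction I would argue the contrapositive: suppose $|\ical_{\text{FU}}(\Ab,\ei)|<q^m-1$. Then there exists a nonzero vector $\ub\in\Fb_q^m$ with $\ub\notin\ical_{\text{FU}}$, and the construction of Section~\ref{coding_scheme} applies verbatim. Specifically, letting $\Sb\in\Fb_q^{(m-1)\times m}$ be a generator matrix of $\csr_u^\perp$ (where $\csr_u=\langle\ub\rangle$), Lemma~\ref{lmm3} guarantees $\Sb\zb\neq\pmb{0}$ for every $\zb\in\ical_{\text{FU}}$, so by Corollary~\ref{corr1} the matrix $\Hb=\Sb\Ab$ is a valid encoder of length $m-1$. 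Therefore $l_{q,\optm}\leq m-1<m$, completing the contrapositive.

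There is no real obstacle here: both directions are immediate consequences of Corollary~\ref{corr1} once one observes the trivial containment $\ical_{\text{FU}}\subseteq\Fb_q^m\setminus\{\pmb{0}\}$. The only care needed is the reduction ``without loss of generality $\Hb=\Sb\Ab$,'' which is already justified in the text preceding Theorem~\ref{thm2}, and the verification that the cardinality condition $|\ical_{\text{FU}}|=q^m-1$ is equivalent to $\ical_{\text{FU}}$ covering every nonzero vector of $\Fb_q^m$.
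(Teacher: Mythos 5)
Your proof is correct and follows essentially the same route as the paper's: the forward direction uses Corollary~\ref{corr1} to force $\Sb$ to have full column rank (the paper phrases this as the columns of $\Sb$ being linearly independent, which is the same as your trivial-null-space argument), and the reverse direction invokes the Section~\ref{coding_scheme} construction via Lemma~\ref{lmm3} to obtain a length-$(m-1)$ encoder. No gaps.
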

\begin{proof}
To prove the theorem we first show that for any $(\Ab,\ei)$ function update problem if $|\ical_{\text{FU}}(\Ab,\ei)| = (q^m-1)$ then \mbox{$l_{q,\optm}=m$}. Next we show that if \mbox{$|\ical_{\text{FU}}(\Ab,\ei)| < (q^m-1)$} then \mbox{$l_{q,\optm} \leq (m-1)$}.
\vspace{2mm}\\
\textit{Proof of first part i.e., if $|\ical_{\text{FU}}(\Ab,\ei)| = (q^m-1)$ then  $l_{q,\optm} = m$}: 

Let $\Hb$ be an optimal encoder matrix with $l=l_{q,\optm}$. Then there exists a matrix $\Sb \in \Fb_q^{l_{q,\optm} \times m}$ such that $\Hb=\Sb\Ab$. From Corollary~\ref{corr1} we obtain $\Sb\zb \neq \pmb{0}$ for all $\zb \in \ical_{\text{FU}}$.  Since $\ical_{\text{FU}}$ contains all non-zero vectors from $\Fb_q^m$, the columns of $\Sb$ are linearly independent. Hence $l_{q,\optm} \geq m$. Again from (\ref{trivial_up}), we have $l_{q,\optm} \leq m$. Hence $l_{q,\optm}=m$.  
\vspace{2mm}\\
\textit{Proof of second part i.e., if $|\ical_{\text{FU}}(\Ab,\ei)| < (q^m-1)$ then  $l_{q,\optm} \leq (m-1)$}: 

If $|\ical_{\text{FU}}(\Ab,\ei)| < (q^m-1)$ then there exists a non-zero vector $\ub \in \Fb_q^m$ such that $\ub \notin \ical_{\text{FU}}(\Ab,\ei)$. Therefore using the technique described in Section~\ref{coding_scheme} we can construct a valid encoder matrix such that we save one transmission compared to the naive scheme, i.e., $l_{q,\optm} \leq (m-1)$. Hence the lemma holds.
\end{proof}
Now we provide a sufficient condition on the field size $q$ to save at least one transmission compared to the naive scheme for any $(\Ab,\ei)$ function update problem. 


\begin{corollary} \label{corr2}
For any $\Ab \in \Fb_q^{m \times n}$ with rank$(\Ab)=m$ where $m > 2\ei$, 
\begin{center}
$l_{q,\optm} \leq (m-1)$ if $q \geq \binom{n}{2\ei}^{1/(m-2\ei)}$
\end{center}.
\end{corollary}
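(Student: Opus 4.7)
The plan is to obtain the conclusion from Theorem~\ref{thm3} by bounding $|\ical_{\text{FU}}(\Ab,\ei)|$ from above and showing that the hypothesis $q \geq \binom{n}{2\ei}^{1/(m-2\ei)}$ forces this bound to lie strictly below $q^m - 1$.

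First I would recall that, by Theorem~\ref{thm3}, it suffices to prove $|\ical_{\text{FU}}(\Ab,\ei)| < q^m - 1$, since then $l_{q,\optm} \leq m-1$ follows.

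Next I would bound the cardinality of $\ical_{\text{FU}}(\Ab,\ei)$ combinatorially. Since $m > 2\ei$ and $n \geq m$, every $\yb \in \Fb_q^n$ with $0 < \wtm(\yb) \leq 2\ei$ has its support contained in some subset $S \subseteq [n]$ of size exactly $2\ei$. For a fixed such $S$, the vectors $\yb$ supported on $S$ form a $2\ei$-dimensional subspace, so the set $\{\Ab\yb : \mathrm{supp}(\yb) \subseteq S,\; \yb \neq \pmb{0}\}$ has at most $q^{2\ei}-1$ elements. Taking the union over all $\binom{n}{2\ei}$ choices of $S$ gives
\begin{equation*}
|\ical_{\text{FU}}(\Ab,\ei)| \;\leq\; \binom{n}{2\ei}\bigl(q^{2\ei}-1\bigr).
\end{equation*}

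Finally I would apply the hypothesis, which rearranges to $q^{m-2\ei} \geq \binom{n}{2\ei}$, so that
\begin{equation*}
|\ical_{\text{FU}}(\Ab,\ei)| \;\leq\; q^{m-2\ei}\bigl(q^{2\ei}-1\bigr) \;=\; q^m - q^{m-2\ei}.
\end{equation*}
Because $q \geq 2$ and $m - 2\ei \geq 1$, we have $q^{m-2\ei} \geq 2$, and hence $|\ical_{\text{FU}}(\Ab,\ei)| \leq q^m - 2 < q^m - 1$. Invoking Theorem~\ref{thm3} then yields $l_{q,\optm} \leq m-1$, completing the proof.

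There is no serious obstacle: the argument is a straightforward union-bound counting step. The only mild subtlety is guaranteeing \emph{strict} inequality $|\ical_{\text{FU}}| < q^m - 1$ rather than the weaker $\leq q^m - 1$, which is handled by the observation that $m > 2\ei$ and $q \geq 2$ leave a slack of at least $q^{m-2\ei} \geq 2$.
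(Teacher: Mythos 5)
Your proposal is correct and follows essentially the same route as the paper: both bound $|\ical_{\text{FU}}(\Ab,\ei)|$ by $\binom{n}{2\ei}(q^{2\ei}-1)$ via a union bound over supports and then invoke Theorem~\ref{thm3}. The only (immaterial) difference is how strictness is obtained---you note $q^m - q^{m-2\ei} \leq q^m-2$, whereas the paper uses the inequality $\tfrac{a-1}{b-1} > \tfrac{a}{b}$ for $a>b$.
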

\begin{proof}
If $q \geq \binom{n}{2\ei}^{1/(m-2\ei)}$ then
\begin{align*}
q^{m-2\ei} &\geq \binom{n}{2\ei}\\
\text{or,}~~\frac{q^m}{q^{2\ei}} &\geq \binom{n}{2\ei}.
\end{align*}
Using the fact that if $a > b$ then $\frac{a-1}{b-1} > \frac{a}{b}$, we have
\begin{align*}
\frac{q^m-1}{q^{2\ei}-1} &> \frac{q^m}{q^{2\ei}}~~~(\text{since $q^m>q^{2\ei}$})\\
\text{or,}~~\frac{q^m-1}{q^{2\ei}-1} &> \binom{n}{2\ei}\\
\text{or,}~~q^m-1 &> \binom{n}{2\ei}(q^{2\ei}-1).
\end{align*}
Now for any $\Ab \in \Fb_q^{m \times n}$ with rank$(\Ab)=m$, the number of distinct non-zero linear combinations of $2\ei$ or fewer columns of $\Ab$ is at the most $\binom{n}{2\ei}(q^{2\ei}-1)$. Therefore $|\ical_{\text{FU}}| \leq \binom{n}{2\ei}(q^{2\ei}-1)$. Hence $(q^m-1) > \binom{n}{2\ei}(q^{2\ei}-1) \geq |\ical_{\text{FU}}|$. Now using Theorem~\ref{thm3} we have $l_{q,\optm} \leq (m-1)$.
\end{proof}
\vspace{1mm}
\subsection{Relation with covering radius}  \label{covering_rad}
The covering radius of an $[n,k]$ linear code $\csr$ over $\Fb_q$, denoted by $\rcovm(\csr)$, is defined as the smallest integer $r$ such that the spheres of radius $r$ centered at each codeword of $\csr$ cover the whole space $\Fb_q^n$. We can determine covering radius of a linear code in terms of the cosets of the code. For any vector $\ab \in \Fb_q^n$, the set $\ab+\csr=\{\ab+\cb~|~\cb \in \csr\}$ is called a coset of the code $\csr$ and in any coset, a vector with minimum Hamming weight is called a coset leader. The covering radius $\rcovm(\csr)$ of the code $\csr$ is the largest among the Hamming weight of all the coset leaders. Upon denoting $\Hb'$ as a parity check matrix of $\csr$, $\Hb'\ub$ is the syndrome of the vector $\ub \in \Fb_q^n$. Two vectors $\ub,\textbf{v} \in \Fb_q^n$ have the same syndrome if and only if they belong to the same coset of $\csr$. Hence there is an one-to-one correspondence between syndromes and cosets \cite{cov_code}.  

For the $(\Ab,\ei)$ function update problem let $\csr_A$ be the linear code generated by $\Ab$. Hence $\Ab$ is a parity check matrix of the code $\csr_A^{\perp}$ which is the dual code of $\csr_A$. Now considering a vector $\zb \in \ical_{\text{FU}}$, $\zb$ can be expressed as $\Ab\yb$ where $\yb \in \Fb_q^n$ with $0 < \wtm(\yb) \leq 2\ei$. Therefore the vector $\zb$ denotes the syndrome of a vector $\yb \in \Fb_q^n$ with $0 < \wtm(\yb) \leq 2\ei$ that belongs to some coset of $\csr_A^{\perp}$. Note that any vector that belongs to $\ical_{\text{FU}}$ is non-zero, hence can not be the syndrome of the codewords of $\csr^\perp_A$. Note that $\zb$ is the syndrome of the coset leader of the coset $\yb +\csr_A^{\perp}$. Since $\yb$ is a vector that belongs to the coset and $0 < \wtm(\yb) \leq 2\ei$, the Hamming weight of the coset leader of the coset is at the most $2\ei$.   

\begin{corollary} \label{corr3}
For an $(\Ab,\ei)$ function update problem, $l_{q,\optm} = m$ if and only if $\rcovm(\csr_A^{\perp}) \leq 2\ei$.
\end{corollary}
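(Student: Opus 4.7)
The plan is to reduce this corollary to Theorem~\ref{thm3} and then translate the counting statement about $\ical_{\text{FU}}(\Ab,\ei)$ into a statement about coset leaders of $\csr_A^{\perp}$. By Theorem~\ref{thm3}, the equality $l_{q,\optm}=m$ holds if and only if $|\ical_{\text{FU}}(\Ab,\ei)|=q^m-1$, so it suffices to show that the latter is equivalent to $\rcovm(\csr_A^{\perp}) \leq 2\ei$.

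First, I would set up the syndrome viewpoint already sketched in the paragraph preceding the corollary. Since $\Ab$ is a parity check matrix of $\csr_A^{\perp}$ and $\mathrm{rank}(\Ab)=m$, the map $\yb \mapsto \Ab\yb$ from $\Fb_q^n$ to $\Fb_q^m$ is surjective, inducing a bijection between cosets of $\csr_A^{\perp}$ in $\Fb_q^n$ and syndromes in $\Fb_q^m$. Under this bijection, $\ical_{\text{FU}}(\Ab,\ei) = \{\Ab\yb \;|\; 0 < \wtm(\yb) \leq 2\ei\}$ is exactly the set of nonzero syndromes of cosets that contain at least one vector of Hamming weight in the range $1,\dots,2\ei$. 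Note the zero syndrome is excluded because it corresponds to the trivial coset $\csr_A^{\perp}$, whose coset leader is $\mathbf{0}$ and has weight $0$, automatically at most $2\ei$.

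Next I would argue the equivalence in both directions. If $\rcovm(\csr_A^{\perp}) \leq 2\ei$, then every coset of $\csr_A^{\perp}$ admits a coset leader of weight at most $2\ei$; each nontrivial coset therefore contributes a vector $\yb$ with $0 < \wtm(\yb) \leq 2\ei$ whose syndrome $\Ab\yb$ lies in $\ical_{\text{FU}}(\Ab,\ei)$, so every one of the $q^m - 1$ nonzero syndromes is realised and $|\ical_{\text{FU}}(\Ab,\ei)| = q^m - 1$. Conversely, if $|\ical_{\text{FU}}(\Ab,\ei)| = q^m - 1$, then every nonzero syndrome is attained by some $\yb$ with $\wtm(\yb) \leq 2\ei$, so every nontrivial coset contains such a vector and hence has a coset leader of weight at most $2\ei$; together with the trivial coset, this gives $\rcovm(\csr_A^{\perp}) \leq 2\ei$.

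Nothing here is delicate; the only place to be careful is the bookkeeping about the zero syndrome being excluded from $\ical_{\text{FU}}$ while still contributing a trivially small coset leader, so no extra hypothesis is needed on that coset. Once that is handled, chaining the two equivalences $l_{q,\optm} = m \iff |\ical_{\text{FU}}(\Ab,\ei)| = q^m - 1 \iff \rcovm(\csr_A^{\perp}) \leq 2\ei$ finishes the proof.
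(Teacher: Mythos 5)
Your proposal is correct and follows essentially the same route as the paper: invoke Theorem~\ref{thm3} to reduce to $|\ical_{\text{FU}}(\Ab,\ei)|=q^m-1$, then use the bijection between cosets of $\csr_A^{\perp}$ and syndromes under $\Ab$ to translate this into the covering-radius condition. Your explicit handling of the zero syndrome and the trivial coset is a small point the paper leaves implicit, but it does not change the argument.
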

\begin{proof}
From Theorem~\ref{thm3} we have $l_{q,\optm}=m$ if and only if $|\ical_{\text{FU}}|=q^m-1$. Hence to prove the corollary we prove that for any $(\Ab,\ei)$ function update problem $|\ical_{\text{FU}}|=q^m-1$ if and only if $\rcovm(\csr_A^{\perp}) \leq 2\ei$. 
\vspace{2mm}\\
\textit{Proof of $\rcovm(\csr_A^{\perp}) \leq 2\ei$ if $|\ical_{\text{FU}}|=q^m-1$:} Since the collection $\ical_{\text{FU}}$ contains all non-zero vectors over $\Fb_q^m$, each non-zero vector $\zb \in \Fb_q^m$ is the syndrome of some vector $\yb \in \Fb_q^n$ with $0 < \wtm(\yb) \leq 2\ei$ that belongs to some coset of $\csr_A^{\perp}$. Since there exists a one-to-one correspondence between the syndromes and cosets, for each vector $\zb \in \ical_{\text{FU}}$ there exists a coset of $\csr_A^{\perp}$ that contains a vector $\yb$ with $0 < \wtm(\yb) \leq 2\ei$. Hence the coset leader of each coset has Hamming weight at the most $2\ei$. Therefore the largest Hamming weight of the coset leaders among all cosets of $\csr_A^{\perp}$ is at the most $2\ei$. Hence $\rcovm(\csr_A^{\perp}) \leq 2\ei$.
\vspace{2mm}\\
\textit{Proof of $|\ical_{\text{FU}}|=q^m-1$ if $\rcovm(\csr_A^{\perp}) \leq 2\ei$:} Since $\rcovm(\csr_A^{\perp}) \leq 2\ei$, the largest Hamming weight of the coset leaders among all cosets of $\csr_A^{\perp}$ is at the most $2\ei$. Since there exists a one-to-one correspondence between the syndromes and cosets, each syndrome $\zb \in \Fb_q^m$ can be expressed as $\Ab\yb$ for some coset leader $\yb \in \Fb_q^n$ satisfies $0 <\wtm(\yb) \leq 2\ei$. We know that the syndromes of a particular linear code covers the whole space. Hence any vector $\zb \in \Fb_q^m \backslash \{\pmb{0}\}$ can be expressed as $\Ab\yb$ for some $\yb \in \Fb_q^n$ with $0 <\wtm(\yb) \leq 2\ei$. Since $\ical_{\text{FU}}$ consists only non-zero vectors that satisfies the above property, $|\ical_{\text{FU}}|=q^m-1$.

Hence $l_{q,\optm} = m$ if and only if $\rcovm(\csr_A^{\perp}) \leq 2\ei$.
\end{proof}
\begin{example} \label{exmp2}
In this example we calculate the minimum number of rows of $\Ab_{m \times n}$ such that $l_{q,\optm} \leq (m-1)$ is guaranteed for $q=2$, $\ei=1$ and $n=8$. Now $l_{q,\optm} \leq (m-1)$ if and only if $\rcovm(\csr_A^{\perp}) \geq 2\ei+1=3$. From Table I of \cite{GS_1985} we observe that for any binary code of length $8$ and dimension up to $3$, covering radius is at least $3$. Thus dim$(\csr_A^{\perp}) \geq 3$ implies $l_{q,\optm} \leq (m-1)$. Hence $(n-m) \leq 3$ and $m \geq (n-3)=5$. Therefore for any matrix $\Ab \in \Fb_2^{5 \times 8}$ with rank$(\Ab)=5$ we can save one transmission compared to the naive scheme. One such example of $\Ab$ is given in Example~\ref{exmp1}.
\end{example}
\section{Lower Bound on Optimal Codelength}  \label{low_bound}
In this section we derive a lower bound on the optimal codelength $l_{q,\optm}$ over $\Fb_q$. First we derive two preliminary lemmas which will help to derive the lower bound. 
\begin{lemma} \label{equiv}
For any $(\Ab,\ei)$ function update problem and for any invertible matrix $\Kb \in \Fb_q^{m \times m}$, $\ical(\Ab,\ei)=\ical(\Kb\Ab,\ei)$.  
\end{lemma}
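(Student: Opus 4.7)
The plan is to unpack the definition of $\ical(\Ab,\ei)$ and show that the only condition that depends on $\Ab$, namely $\Ab\yb \neq \pmb{0}$, is preserved when $\Ab$ is replaced by $\Kb\Ab$ for an invertible $\Kb$. The weight constraint $0 < \wtm(\yb) \leq 2\ei$ is intrinsic to $\yb$ and does not involve $\Ab$ at all, so it is automatically identical in both collections.

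Concretely, I would fix an arbitrary $\yb \in \Fb_q^n$ with $0 < \wtm(\yb) \leq 2\ei$ and compare the two conditions $\Ab\yb \neq \pmb{0}$ and $(\Kb\Ab)\yb \neq \pmb{0}$. Using associativity of matrix multiplication, $(\Kb\Ab)\yb = \Kb(\Ab\yb)$. Since $\Kb \in \Fb_q^{m \times m}$ is invertible, the linear map $\vb \mapsto \Kb\vb$ has trivial kernel, so $\Kb\vb = \pmb{0}$ if and only if $\vb = \pmb{0}$. Applying this with $\vb = \Ab\yb$ yields $\Kb\Ab\yb \neq \pmb{0}$ if and only if $\Ab\yb \neq \pmb{0}$. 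Hence $\yb \in \ical(\Ab,\ei)$ if and only if $\yb \in \ical(\Kb\Ab,\ei)$, giving the desired equality.

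There is no real obstacle here; the lemma is essentially an immediate consequence of the invertibility of $\Kb$ and the fact that the Hamming weight constraint in the definition of $\ical$ is independent of the matrix. The only thing to be careful about is writing the biconditional in both directions cleanly, which is handled by the invertibility argument above in a single step.
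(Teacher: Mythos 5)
Your proof is correct and follows essentially the same route as the paper's: the paper proves the two containments separately by left-multiplying by $\Kb$ and by $\Kb^{-1}$, which is exactly the biconditional you establish via the trivial kernel of an invertible matrix. No issues.
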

\begin{proof}
To prove the lemma we first show that $\ical(\Ab,\ei)\subseteq \ical(\Kb\Ab,\ei)$ and then $\ical(\Kb\Ab,\ei) \subseteq \ical(\Ab,\ei)$.
\vspace{2mm}\\
\textit{Proof for $\ical(\Ab,\ei)\subseteq \ical(\Kb\Ab,\ei)$}: Suppose $\yb \in \ical(\Ab,\ei)$. Then from (\ref{ical_fu}) we have $\Ab\yb \neq \pmb{0}$. Now left multiplying both side by $\Kb$ we obtain $\Kb\Ab\yb \neq \pmb{0}$ since $\Kb$ is invertible. Hence $\yb \in \ical(\Kb\Ab,\ei)$.
\vspace{2mm}\\
\textit{Proof for $\ical(\Kb\Ab,\ei)\subseteq \ical(\Ab,\ei)$}: Suppose $\yb \in \ical(\Kb\Ab,\ei)$. Then from (\ref{ical_fu}) we have $\Kb\Ab\yb \neq \pmb{0}$. Since $\Kb$ is invertible, $\Kb^{-1}$ exists. Now left multiplying both side by $\Kb^{-1}$ we obtain $\Ab\yb \neq \pmb{0}$. Hence $\yb \in \ical(\Ab,\ei)$.

Hence the lemma holds.
\end{proof}
For any $(\Ab,\ei)$ function update problem $\Ab \in \Fb_q^{m \times n}$ with rank($\Ab$)=$m$. Hence $\Ab$ contains $m$ linearly independent columns. Now consider a matrix $\Kb'$ which contains $m$ linearly independent columns of $\Ab$. Note that $\Kb'$ is an $m \times m$ full rank matrix and hence invertible. Denote $\Kb=\Kb'^{-1}$ and $\Ab'=\Kb\Ab$. From Lemma~\ref{equiv} we observe that $(\Ab,\ei)$ and $(\Ab',\ei)$ are equivalent function update problems and any matrix $\Hb$ is a valid encoder matrix of $(\Ab,\ei)$ function update problem if and only if $\Hb$ is a valid encoder matrix of $(\Ab',\ei)$ function update problem. Hence we conclude that the linear code generated by the rows of $\Hb$ is a subcode of the linear code generated by the rows of $\Ab'$ i.e., $\csr_{H} \subseteq \csr_{A'}$. Hence there exists a matrix $\Sb' \in \Fb_q^{l \times m}$ such that $\Hb=\Sb'\Ab'$. Now using the equivalence between $(\Ab,\ei)$ and $(\Ab',\ei)$ function update problems and using Corollary~\ref{corr1} we say that $\Hb=\Sb'\Ab'$ is a valid encoder matrix of the $(\Ab,\ei)$ function update problem if and only if $\Sb'\zb \neq \pmb{0}$ for all $\zb \in \ical_{\text{FU}}(\Ab',\ei)$.

Let $\bcal^{\ast}(m,2\ei)=\{\zb \in \Fb_q^m~|~0 < \wtm(\zb) \leq 2\ei\}$ be the set of all non-zero vectors in $\Fb_q^m$ of Hamming weight at the most $2\ei$.
\begin{lemma} \label{lmm6}
For any $(\Ab,\ei)$ function update problem
\begin{center}
$\bcal^{\ast}(m,2\ei) \subseteq \ical_{\text{FU}}(\Ab,\ei)$.
\end{center}
\end{lemma}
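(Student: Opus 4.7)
The plan is to use Lemma~\ref{equiv} to reduce, without loss of generality, to the case where $\Ab$ contains the identity matrix $\Ib_m$ as an $m \times m$ submatrix, and then construct a sparse preimage of each $\zb \in \bcal^{\ast}(m,2\ei)$ by placing the coordinates of $\zb$ on exactly those identity columns. This is precisely the setting established in the paragraph preceding the lemma: because $\Ab$ has rank $m$, there exist $m$ linearly independent columns of $\Ab$ forming an invertible matrix $\Kb'$, and replacing $\Ab$ by $\Ab' = \Kb'^{-1}\Ab$ yields an equivalent function update problem in which $\Ib_m$ appears at column positions $j_1,\ldots,j_m$.

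Assuming $\Ab$ is in this standardized form, I would fix any $\zb = (z_1,\dots,z_m)^T \in \bcal^{\ast}(m,2\ei)$ and define $\yb \in \Fb_q^n$ by $y_{j_i} = z_i$ for $i \in [m]$ and $y_j = 0$ for $j \notin \{j_1,\dots,j_m\}$. A direct computation using the identity block gives $\Ab\yb = \zb$, and by construction $\wtm(\yb) = \wtm(\zb) \in (0,2\ei]$. Since $\Ab\yb = \zb \neq \pmb{0}$, it follows that $\yb \in \ical(\Ab,\ei)$ and hence $\zb \in \ical_{\text{FU}}(\Ab,\ei)$, which establishes the desired inclusion.

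There is essentially no substantive obstacle in the argument: the whole proof reduces to the observation that, after placing $\Ab$ in the standardized form guaranteed by Lemma~\ref{equiv}, the $\zb$-valued indicator on the identity columns provides an immediate preimage whose Hamming weight equals that of $\zb$. The only point requiring care is invoking Lemma~\ref{equiv} correctly so that the standardization step is consistent with the lower bound derivation that follows, since it is the form with $\Ib_m$ as a submatrix for which $\ical_{\text{FU}}$ is directly shown to contain $\bcal^{\ast}(m,2\ei)$.
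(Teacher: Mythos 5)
Your proposal follows exactly the same route as the paper's own proof: standardize $\Ab$ via Lemma~\ref{equiv} so that it contains $\Ib_m$ as a submatrix, then read off a sparse preimage of each $\zb\in\bcal^{\ast}(m,2\ei)$ from the identity columns. However, the ``without loss of generality'' reduction --- which you yourself single out as the only point requiring care, without actually resolving it --- is precisely where the argument (and, it should be said, the paper's own proof) does not go through. Lemma~\ref{equiv} establishes invariance of the \emph{preimage} set, $\ical(\Ab,\ei)=\ical(\Kb\Ab,\ei)$, but not of the image set: directly from the definition one only gets $\ical_{\text{FU}}(\Kb\Ab,\ei)=\Kb\,\ical_{\text{FU}}(\Ab,\ei)$, which in general is a different subset of $\Fb_q^m$ from $\ical_{\text{FU}}(\Ab,\ei)$. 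Your computation therefore proves $\bcal^{\ast}(m,2\ei)\subseteq\ical_{\text{FU}}(\Ab',\ei)$, equivalently $\Kb'\,\bcal^{\ast}(m,2\ei)\subseteq\ical_{\text{FU}}(\Ab,\ei)$, which is not the stated inclusion.

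The literal statement can in fact fail. Over $\Fb_2$ take $\ei=1$ and $m=n=3$, and let $\Ab$ have columns $(1,0,1)^T$, $(1,1,1)^T$, $(0,1,1)^T$; this matrix has rank $3$, yet the six nonzero combinations of at most two of its columns exhaust exactly $\Fb_2^3\setminus\{\pmb{0},(0,0,1)^T\}$, so the weight-one vector $(0,0,1)^T$ lies in $\bcal^{\ast}(3,2)$ but not in $\ical_{\text{FU}}(\Ab,1)$. The repair is to assert the inclusion for the standardized matrix $\Ab'=\Kb\Ab$ only (equivalently, to state it as $\bcal^{\ast}(m,2\ei)\subseteq\Kb\,\ical_{\text{FU}}(\Ab,\ei)$). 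The intended application in Theorem~\ref{thm4} survives this correction: writing $\Hb=\Sb'\Ab'$ with $\Sb'=\Sb\Kb'$ and applying Corollary~\ref{corr1} to the equivalent problem $(\Ab',\ei)$ shows that every $2\ei$ columns of $\Sb'$ are linearly independent, and $\Sb'$ has the same number of rows $l_{q,\optm}$ as $\Sb$, so the lower bound is unchanged. If you make that adjustment explicit, your construction of the sparse preimage on the identity columns is correct and complete.
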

\begin{proof}
For an $(\Ab,\ei)$ function update problem, $\Ab'=\Kb\Ab$ where $\Kb=\Kb'^{-1}$ and $\Kb'$ consists of $m$ linearly independent columns of $\Ab$. Note that the sub-matrix of $\Ab'$ that contains the corresponding columns forms an $m \times m$ identity matrix. Now if we consider any non-zero linear combination of $2\ei$ of fewer columns of this sub-matrix we obtain all non-zero vectors over $\Fb_q^m$ with Hamming weight at the most $2\ei$. Hence $\bcal^{\ast}(m,2\ei) \subseteq \ical_{\text{FU}}(\Ab',\ei)=\ical_{\text{FU}}(\Ab,\ei)$. The last equality holds due to Lemma~\ref{equiv}.
\end{proof}
Let $k_q(m,2\ei+1)$ be the maximum dimension among all linear codes over $\Fb_q$ with blocklength $m$ and minimum distance $d_{\mathrm{min}} \geq 2\ei+1$.
\begin{theorem} \label{thm4}
The optimal codelength of the $(\Ab,\ei)$ function update problem over $\Fb_q$ satisfies
\begin{center}
$l_{q,\optm} \geq m-k_q(m,2\ei+1)$.
\end{center}
\end{theorem}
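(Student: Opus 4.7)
The plan is to reduce the lower bound to a statement about the kernel of the matrix $\Sb$ that defines the encoder, and then interpret this kernel as a linear code of prescribed minimum distance.

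First, let $\Hb$ be any valid linear encoder matrix of codelength $l = l_{q,\optm}$. By the reduction preceding Theorem~\ref{thm2}, we may write $\Hb = \Sb\Ab$ for some $\Sb \in \Fb_q^{l \times m}$, and by Corollary~\ref{corr1} this is valid iff $\Sb\zb \neq \pmb{0}$ for every $\zb \in \ical_{\text{FU}}(\Ab,\ei)$. Using Lemma~\ref{lmm6}, $\bcal^{\ast}(m,2\ei) \subseteq \ical_{\text{FU}}(\Ab,\ei)$, so in particular $\Sb\zb \neq \pmb{0}$ for every non-zero $\zb \in \Fb_q^m$ with $\wtm(\zb) \leq 2\ei$.

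Next, consider the kernel $\ccal = \{\zb \in \Fb_q^m \mid \Sb\zb = \pmb{0}\}$, viewed as a linear code of blocklength $m$ over $\Fb_q$. The previous paragraph says exactly that $\ccal$ contains no non-zero vector of Hamming weight at most $2\ei$, which is to say that the minimum distance of $\ccal$ is at least $2\ei+1$. By the rank--nullity theorem, $\dim(\ccal) = m - \mathrm{rank}(\Sb) \geq m - l$, since $\Sb$ has only $l$ rows. By the very definition of $k_q(m,2\ei+1)$ as the largest dimension achievable by any $\Fb_q$-linear code of blocklength $m$ and minimum distance at least $2\ei+1$, we get $\dim(\ccal) \leq k_q(m,2\ei+1)$, hence $m - l \leq k_q(m,2\ei+1)$, which rearranges to the claimed bound $l_{q,\optm} \geq m - k_q(m,2\ei+1)$.

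There is no real obstacle here: the argument is short once one recognizes the duality between the condition ``$\Sb$ kills no low-weight vector'' and ``$\ker(\Sb)$ has large minimum distance''. The only point that needs care is the invocation of Lemma~\ref{lmm6}, which is what allows us to ignore the specific structure of $\Ab$ and replace the set $\ical_{\text{FU}}(\Ab,\ei)$ by the universal set $\bcal^{\ast}(m,2\ei)$; without this step the kernel of $\Sb$ would only be constrained to avoid a subset of low-weight vectors, which would not directly correspond to a minimum-distance condition on a linear code.
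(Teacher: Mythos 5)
Your proof is correct and takes essentially the same route as the paper's: reduce via Corollary~\ref{corr1} and Lemma~\ref{lmm6} to the condition that $\Sb\zb\neq\pmb{0}$ on all of $\bcal^{\ast}(m,2\ei)$, then read off that $\ker(\Sb)$ is a length-$m$ code of minimum distance at least $2\ei+1$ and dimension at least $m-l$. The only (cosmetic) difference is that you argue through $\ker(\Sb)$ and rank--nullity where the paper describes $\Sb$ as a parity-check matrix; your phrasing is in fact slightly more careful about the possibility that $\Sb$ does not have full row rank.
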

\begin{proof}
Let $\Hb$ be an optimal encoder matrix of $(\Ab,\ei)$ function update problem with codelength $l=l_{q,\optm}$. Then there exists a matrix $\Sb \in \Fb_q^{l_{q,\optm} \times m}$ such that $\Hb=\Sb\Ab$. Now using Corollary~\ref{corr1} we have $\Sb\zb \neq \pmb{0}$ for all $\zb \in \ical_{\text{FU}}(\Ab,\ei)$. Since $\bcal^{\ast}(m,2\ei) \subseteq \ical_{\text{FU}}(\Ab,\ei)$, it follows that $\Sb\zb \neq \pmb{0}$ for all $\zb \in \bcal^{\ast}(m,2\ei)$. Therefore any set of $2\ei$ columns of $\Sb$ are linearly independent. Hence $\Sb$ is a parity check matrix of a linear code of block length $m$ and minimum distance at least $2\ei+1$. Thus the dimension of this code satisfies $m-l_{q,\optm} \leq k_q(m,2\ei+1)$. Then $l_{q,\optm} \geq m-k_q(m,2\ei+1)$.
\end{proof}

Theorem~\ref{thm4} provides a lower bound that is aware of the field size $q$. 
This is tighter than the bound $l \geq 2\ei$ given in~\cite{NSR_2014,NSRSR_update_2014,PM_2018} since from Singleton bound we know that $k_q(m,2\ei+1) \leq m - 2\ei$, and this combined with Theorem~\ref{thm4} yields $l \geq 2\ei$. 
Hence, irrespective of the matrix $\Ab$, a necessary condition for $l_{q,\optm}=2\ei$ is that an $[m,m-2\ei]$ MDS code over $\Fb_q$ must exist.

\section{Code constructions}    \label{code_const}
In this section we first derive an upper bound on the optimal codelength $l_{q,\optm}$ over $\Fb_q$ and then provide code constructions for $(\Ab,\ei)$ function update problem when $\Ab$ is in form given by (\ref{a_struct}). Define $\eta=\max\limits_{\zb \in \ical_{\text{FU}}}{\wtm(\zb)}$.
\begin{theorem} \label{up}
The optimal codelength of the $(\Ab,\ei)$ function update problem over $\Fb_q$ satisfies
\begin{center}
$l_{q,\optm} \leq m-k_q(m,\eta+1)$.
\end{center}
\end{theorem}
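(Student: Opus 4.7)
The plan is to mirror the argument used for Theorem~\ref{thm4} but run it in the reverse direction: instead of extracting a parity check matrix from an optimal $\Sb$, I would build $\Sb$ from the parity check matrix of a code achieving the maximum dimension $k_q(m,\eta+1)$, and then verify via Corollary~\ref{corr1} that $\Hb=\Sb\Ab$ is a valid encoder matrix.

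Concretely, by the definition of $k_q(m,\eta+1)$, there exists a linear code over $\Fb_q$ of blocklength $m$, dimension $k_q(m,\eta+1)$, and minimum distance at least $\eta+1$. Let $\Sb \in \Fb_q^{(m-k_q(m,\eta+1)) \times m}$ be a parity check matrix of this code. A standard fact from coding theory is that any $\eta$ columns of $\Sb$ are linearly independent (since the minimum distance is at least $\eta+1$), and therefore $\Sb\zb \neq \pmb{0}$ for every $\zb \in \Fb_q^m$ with $0 < \wtm(\zb) \leq \eta$.

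Now I would invoke the definition $\eta=\max_{\zb \in \ical_{\text{FU}}}\wtm(\zb)$: every $\zb \in \ical_{\text{FU}}(\Ab,\ei)$ is a non-zero vector in $\Fb_q^m$ of Hamming weight at most $\eta$, so $\Sb\zb \neq \pmb{0}$ for all $\zb \in \ical_{\text{FU}}(\Ab,\ei)$. By Corollary~\ref{corr1}, the matrix $\Hb = \Sb\Ab$ is therefore a valid encoder matrix for the $(\Ab,\ei)$ function update problem, with codelength equal to the number of rows of $\Sb$, namely $m-k_q(m,\eta+1)$. Since this is an achievable codelength, the optimal linear codelength satisfies $l_{q,\optm} \leq m - k_q(m,\eta+1)$.

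There is no serious obstacle here — the construction is essentially dual to the lower bound proof. The only point worth highlighting is the role of $\eta$: the bound $l \leq m - k_q(m,2\ei+1)$ (which would symmetrically match Theorem~\ref{thm4}) is in general loose, because the actual set $\ical_{\text{FU}}$ may only contain vectors of weight strictly less than $2\ei$ when the columns of $\Ab$ satisfy linear dependencies; using $\eta$ instead of $2\ei$ tightens the bound by exploiting the structure of $\Ab$. Everything else is a direct application of Corollary~\ref{corr1} and the correspondence between parity check matrices and minimum distance.
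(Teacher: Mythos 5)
Your construction is correct and is essentially the paper's own proof: take $\Sb$ to be a parity check matrix of a largest-dimension $[m,k_q(m,\eta+1)]$ code with minimum distance at least $\eta+1$, note that any $\eta$ columns of $\Sb$ are linearly independent so that $\Sb\zb\neq\pmb{0}$ for all $\zb\in\ical_{\text{FU}}$ (since every such $\zb$ has weight at most $\eta$ by definition), and invoke Corollary~\ref{corr1}. One caution about your closing remark, though: the direction is backwards. By Lemma~\ref{lmm6} every non-zero vector of weight at most $2\ei$ already lies in $\ical_{\text{FU}}$, so $\eta\geq 2\ei$ always; vectors of $\ical_{\text{FU}}$ can have weight \emph{exceeding} $2\ei$ (e.g.\ when columns of $\Ab$ have weight greater than one), which is precisely why $2\ei$ cannot replace $\eta$ here. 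Consequently $m-k_q(m,\eta+1)\geq m-k_q(m,2\ei+1)$: using $\eta$ does not tighten the upper bound relative to a hypothetical $2\ei$ version, it is simply what is needed for the argument to be valid at all.
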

\begin{proof}
From Corollary~\ref{corr1} we have that a matrix $\Hb=\Sb\Ab \in \Fb_q^{l \times n}$ for some matrix $\Sb \in \Fb_q^{l \times m}$, is a valid encoder matrix if and only if $\Sb\zb \neq \pmb{0},~\forall \zb \in \ical_{\text{FU}}$. To satisfy this condition it is sufficient that any set of $\eta$ columns of $\Sb$ are linearly independent. Now consider $\Sb$ as a parity check matrix of the largest linear code with blocklength $m$ and minimum distance $d_{\mathrm{min}} \geq \eta+1$. The resulting codelength $l=m-k_q(m,\eta+1)$. Hence the upper bound on the optimal codelength holds.
\end{proof}
\subsection{Code constructions for striped data} \label{striped_data}
In this section we provide linear code construction of an $(\Ab^S,\ei)$ function update problem where $\Ab^S \in \Fb_q^{m \times n}$ follows the structure given by
\begin{equation} \label{a_struct}
\Ab^S=
\begin{bmatrix}
\Cb & \pmb{0} & \dots & \pmb{0}\\
\pmb{0} & \Cb & \dots & \pmb{0}\\
\vdots & \vdots & \ddots & \vdots\\
\pmb{0} & \pmb{0} & \dots & \Cb
\end{bmatrix}
\end{equation}
where $\Cb \in \Fb_q^{t \times K}$ and $\pmb{0}$ is a $t \times K$ matrix over $\Fb_q$ whose all elements are $0$. Let $a$ be the number of repetitions of $\Cb$ in the matrix $\Ab^S$. Hence we write $m=at$ and $n=aK$. First we consider the family of $(\Ab^S,\ei)$ function update problems where $t=1$ and show that for this case the lower bound on optimal codelength given in Theorem~\ref{thm4} and the upper bound on optimal codelength given in Theorem~\ref{up} exactly matches with each other. Hence we characterize the optimal codelength for this family of function update problems. Our code construction is based on an appropriately chosen linear error correcting code. Note that in Section IV of \cite{PM_2018} the authors provided a linear code construction based on maximally recoverable subcodes (MRSC) which requires field size $q \geq m$ and uses an $[m,m-2\ei]$ MDS code. In comparison our code construction is suitable for any field size. 
\vspace{2mm}
\subsubsection{Code Constructions for the family of $(\Ab^S,\ei)$ function update problems with $t=1$} \label{t_1} In this sub-section we first calculate the optimal codelength for such family of function update problems and then provide a code construction based on an appropriately chosen linear error correcting code.
\begin{theorem}  \label{opt_len}
For the family of $(\Ab^S,\ei)$ function update problems with $t=1$ the optimal codelength over $\Fb_q$ is given by 
\begin{equation*}
l_{q,\optm}=m-k_q(m,2\ei+1).
\end{equation*}
\end{theorem}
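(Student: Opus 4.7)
\emph{Plan.} My strategy is to sandwich $l_{q,\optm}$ between the lower bound from Theorem~\ref{thm4} and the upper bound from Theorem~\ref{up}, and then verify that these two bounds coincide in the special case $t=1$. The lower bound $l_{q,\optm} \geq m - k_q(m, 2\ei+1)$ is already supplied by Theorem~\ref{thm4} in full generality, so nothing further is required for this direction. The entire content of the theorem therefore reduces to establishing the matching upper bound.

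\emph{Upper bound via Theorem~\ref{up}.} Theorem~\ref{up} gives $l_{q,\optm} \leq m - k_q(m, \eta+1)$, where $\eta = \max_{\zb \in \ical_{\text{FU}}(\Ab^S,\ei)} \wtm(\zb)$. I would show that when $t=1$ we have $\eta \leq 2\ei$. Since $t=1$ forces $a=m$, I partition any $\yb \in \Fb_q^n$ as $\yb = (\yb_1^T, \dots, \yb_m^T)^T$ with $\yb_i \in \Fb_q^K$. The block-diagonal form~(\ref{a_struct}) immediately yields that the $i$-th entry of $\Ab^S\yb$ equals $\Cb\yb_i$, which is zero whenever $\yb_i = \pmb{0}$. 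Therefore $\wtm(\Ab^S \yb) \leq |\{i : \yb_i \neq \pmb{0}\}| \leq \wtm(\yb)$. For every $\yb \in \ical(\Ab^S,\ei)$, $\wtm(\yb) \leq 2\ei$, which gives $\wtm(\Ab^S \yb) \leq 2\ei$. Maximizing over all such $\yb$ proves $\eta \leq 2\ei$.

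\emph{Combining.} Since $k_q(m, d)$ is non-increasing in $d$, the inequality $\eta \leq 2\ei$ yields $k_q(m, \eta+1) \geq k_q(m, 2\ei+1)$, and hence
\begin{equation*}
l_{q,\optm} \leq m - k_q(m, \eta+1) \leq m - k_q(m, 2\ei+1).
\end{equation*}
This matches the lower bound from Theorem~\ref{thm4}, completing the proof.

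\emph{Main obstacle.} There is no substantial obstacle here: the proof is essentially a direct application of the two general bounds already proved, together with the observation that the disjoint column supports of the diagonal blocks of $\Ab^S$ prevent the map $\yb \mapsto \Ab^S\yb$ from increasing Hamming weight. The only thing one must be a little careful about is the monotonicity direction of $k_q(m, \cdot)$ in the second argument when translating the bound on $\eta$ into a bound on the codelength.
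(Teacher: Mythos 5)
Your proof is correct and follows essentially the same route as the paper: the lower bound is quoted from Theorem~\ref{thm4}, and the upper bound comes from Theorem~\ref{up} after showing that $\eta\le 2\ei$ for striped $\Ab^S$ with $t=1$ (the paper establishes this via the weight-one columns of $\Ab^S$, which is the same observation as your block-wise argument). Your explicit use of the monotonicity of $k_q(m,\cdot)$ is a minor tidying of the paper's claim that $\eta=2\ei$ exactly, but the substance is identical.
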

\begin{proof}
Consider any $\zb \in \ical_{\text{FU}}(\Ab^S,\ei)$. Then $\zb$ can be written as $\zb=\Ab^S\yb$ for some $\yb \in \Fb_q^n$ with $0<\wtm(\yb)\leq 2\ei$. Hence we write $\zb=[\Ab^{S,1}~\Ab^{S,2}~\dots~\Ab^{S,n}]\yb$ where $\Ab^{S,i}$ denotes the $i^{\thm}$ column of $\Ab^S$ and $\wtm(\Ab^{S,i})=1$ for all $i \in [n]$. Now $\wtm(\zb)=\wtm(\Ab^{S,1}y_1+\Ab^{S,2}y_2+\dots+\Ab^{S,n}y_n) \leq \wtm(\Ab^{S,1}y_1)+\wtm(\Ab^{S,2}y_2)+\dots+\wtm(\Ab^{S,n}y_n)$. Since $0 < \wtm(\yb) \leq 2\ei$, at the most $2\ei$ terms among $\Ab^{S,1}y_1,\Ab^{S,2}y_2,\dots,\Ab^{S,n}y_n$ are non-zero and each $\Ab^{S,i}y_i,~i \in [n]$ has Hamming weight at the most $1$. Hence for any $\zb \in \ical_{\text{FU}}(\Ab^S,\ei)$, we have $\wtm(\zb) \leq 2\ei$. It is easy to observe that $\eta=\max\limits_{\zb \in \ical_{\text{FU}}(\Ab^S,\ei)}{\wtm(\zb)}=2\ei$. So using Theorem~\ref{up} we have $l_{q,\optm} \leq m-k_q(m,2\ei+1)$. Again from Theorem~\ref{thm4} we have $l_{q,\optm} \geq m-k_q(m,2\ei+1)$. Since the lower bound and the upper bound matches with each other we have $l_{q,\optm} = m-k_q(m,2\ei+1)$.  
\end{proof}  
Now we provide a code construction for the family of $(\Ab^S,\ei)$ function update problems with $t=1$. Since for any $(\Ab^S,\ei)$ function update problem with $t=1$ the value of $\eta$ is  $2\ei$, it is sufficient that $\Sb\zb \neq \pmb{0}$ for any $\zb$ with $0 < \wtm(\zb)\leq 2\ei$. Hence it is sufficient that any $2\ei$ columns of $\Sb$ are linearly independent. Now consider $\Sb$ as a parity check matrix of a linear code of maximum dimension with blocklength $m$ and minimum distance $d_{\text{min}} \geq 2\ei+1$ and set the encoder matrix $\Hb=\Sb\Ab^S$. This code achieves the optimal codelength $l_{q,\optm}=m-k_q(m,2\ei+1)$. Now if $q \geq m$ then there exists an MDS code over $\Fb_q$ with blocklength $m$ and minimum distance $d_{\text{min}}=2\ei+1$ which has maximum dimension $k_q(m,2\ei+1)=m-2\ei$ among all linear codes over $\Fb_q$. Hence choosing $\Sb$ as a parity check matrix of an $[m,m-2\ei]$ MDS code $\Fb_q,~q \geq m$ and encoder matrix $\Hb=\Sb\Ab^S$ we achieve codelength $l_{q,\optm}=2\ei$ which matches the codelength achieved by the construction given in Section IV of \cite{PM_2018} which also requires $q \geq m$.
\begin{example} \label{exmp3}
Consider an $(\Ab^S,\ei)$ function update problem over $\Fb_2$ where $\ei=1$ and $\Ab^S$ is given by
\begin{equation*} 
\Ab^S=
\setcounter{MaxMatrixCols}{20}
\begin{bmatrix}
1 & 1 & 1 & 0 & 0 & 0 & 0 & 0 & 0 & 0 & 0 & 0\\  
0 & 0 & 0 & 1 & 1 & 1 & 0 & 0 & 0 & 0 & 0 & 0\\
0 & 0 & 0 & 0 & 0 & 0 & 1 & 1 & 1 & 0 & 0 & 0\\
0 & 0 & 0 & 0 & 0 & 0 & 0 & 0 & 0 & 1 & 1 & 1
\end{bmatrix}.
\end{equation*}
Now from \cite{Grassl:codetables} we have $k_2(4,3)=1$. Hence choosing $\Sb$ as parity check matrix of a $[4,1]$ repetition code over $\Fb_2$ we achieve codelength $l_{2,\optm}=3$.

If we view the above matrix $\Ab^S$ as over $\Fb_4$ and the $(\Ab^S,\ei)$ function update problem over $\Fb_4$ where $\ei=1$, from \cite{Grassl:codetables} we have $k_4(4,3)=2$. Hence choosing $\Sb$ as parity check matrix of a $[4,2,3]$ MDS code over $\Fb_4$ we achieve codelength $l_{4,\optm}=2$.
\end{example}
\vspace{2mm}
\subsubsection{Code constructions for the family of $(\Ab^S,\ei)$ function update problems where $t\geq 1$} \label{t_ge_1} 
In this sub-section we provide a linear code construction for the family of $(\Ab^S,\ei)$ function update problems where $\Ab^S$ is given in (\ref{a_struct}) with $t\geq 1$. A matrix $\Hb \in \Fb_q^{l \times n}$ is a valid encoder matrix if and only if there exists a matrix $\Sb \in \Fb_q^{l \times m}$ such that $\Hb=\Sb\Ab^S$ satisfies $\Sb\zb \neq \pmb{0}$ for all $\zb \in \ical_{\text{FU}}(\Ab^S,\ei)$. For any vector $\zb \in \ical_{\text{FU}}(\Ab^S,\ei)$ we write $\zb=\Ab\yb$ for some $\yb \in \Fb_q^n$ with $0 < \wtm(\yb) \leq 2\ei$. Hence
\begin{equation*}
\zb = \Ab\yb =
\begin{bmatrix}
\Cb & \pmb{0} & \dots & \pmb{0}\\
\pmb{0} & \Cb & \dots & \pmb{0}\\
\vdots & \vdots & \ddots & \vdots\\
\pmb{0} & \pmb{0} & \dots & \Cb
\end{bmatrix}
\begin{bmatrix}
\yb_1\\
\yb_2\\
\vdots\\
\yb_a
\end{bmatrix}=
\begin{bmatrix}
\Cb\yb_1\\
\Cb\yb_2\\
\vdots\\
\Cb\yb_a
\end{bmatrix}
\end{equation*} 
where $\yb=[\yb_1^T~\yb_2^T~\dots~\yb_a^T]^T$ with $a=\frac{n}{K}=\frac{m}{t}$ and each $\yb_i \in \Fb_q^K,~\forall i \in [a]$. Since $0 < \wtm(\yb) \leq 2\ei$, at the most $2\ei$ vectors among $\yb_1,\yb_2,\dots,\yb_a$ are non-zero. Hence at the most $2\ei$ vectors among $\Cb\yb_1,\Cb\yb_2,\dots,\Cb\yb_a$ are non-zero. Denote $\zb=[\zb_1^T~\zb_2^T~\dots~\zb_a^T]^T$ where each $\zb_i=\Cb\yb_i \in \Fb_q^t,~\forall i \in [a]$. Therefore we have that at the most $2\ei$ vectors among $\zb_1,\zb_2,\dots,\zb_a$ are non-zero. Now for any $\zb \in \ical_{\text{FU}}(\Ab^S,\ei)$ we write
\begin{align} 
&\Sb\zb \neq \pmb{0}\\ 
\Rightarrow &\begin{bmatrix}
\Sb_1 & \Sb_2 \cdots \Sb_a
\end{bmatrix} [\zb_1^T~\zb_2^T~\cdots~\zb_a^T]^T \neq \pmb{0}\\
\Rightarrow ~&\Sb_1\zb_1+\Sb_2\zb_2+\dots+\Sb_a\zb_a \neq \pmb{0}. \label{sz}
\end{align} 
where $\Sb_i \in \Fb_q^{l \times t},~i \in [a]$ is the sub-matrix of $\Sb$ containing $(i-1)t+1^{\thm}$ to $it^{\thm}$ columns of $\Sb$. 
\vspace{2mm}\\
\textbf{I. Case-1, $t \geq 1$, $\ei=1$:}
To satisfy the condition given in (\ref{sz}) for $\ei=1$ it is sufficient that the columns of any two or fewer sub-matrices among $\Sb_1,\Sb_2,\dots,\Sb_a$ form linearly independent set. Hence the columns of each sub-matrix $\Sb_i,~i \in [a]$ are linearly independent. Let $\scal_i$ be the $t$-dimensional subspace of $\Fb_q^l$ generated by the columns of $\Sb_i$ over $\Fb_q$. Now to satisfy the linear independence property of the columns of two or fewer sub-matrices among $\Sb_1,\Sb_2,\dots,\Sb_a$, it is sufficient to have $\scal_i \cap \scal_j =\{\pmb{0}\}$ for any $i,j \in [a]$ and $i \neq j$. Our code construction for an $(\Ab^S,1)$ function update problem where $t \geq 1$ is based on subspace codes.

\begin{const} \label{eps_1}
Our aim is to construct a matrix $\Sb =[\Sb_1~\Sb_2~\dots~\Sb_a] \in \Fb_q^{l \times m}$ where $\Sb_i \in \Fb_q^{l \times t},~i \in [a]$ is the sub-matrix of $\Sb$ containing $(i-1)t+1^{\thm}$ to $it^{\thm}$ columns of $\Sb$ such that the subspaces generated by the columns any two sub-matrices $\Sb_i$ and $\Sb_j$ for $i \neq j,~i,j \in [a]$ are trivially intersecting. Note that for any $i \neq j,~i,j \in [a]$ the subspaces $\scal_i$ and $\scal_j$ generated by the columns of $\Sb_i$ and $\Sb_j$ respectively are $t$ dimensional subspace of $\Fb_q^l$ and satisfies $\scal_i \cap \scal_j=\{\pmb{0}\}$. Hence to construct such $\Sb$ matrix we utilize  pairwise trivially intersecting $t$-dimensional subspaces $\scal_1,\scal_2,\dots,\scal_a$ of $\Fb_q^l$. From the literature on subspace codes \cite{EW_2018,ES_2013,KK_2008}, we know that if $l \geq 2t$ then there exist at least $q^{l-t}$ pairwise trivially intersecting $t$-dimensional subspaces in $\Fb_q^l$. Hence if $q \geq a^{\frac{1}{l-t}}$ and provided $l \geq 2t$ it is possible to find pairwise trivially intersecting $t$-dimensional subspaces $\scal_1,\scal_2,\dots,\scal_a$ of $\Fb_q^l$. Now to construct $\Sb =[\Sb_1~\Sb_2~\dots~\Sb_a]$ we choose a basis of $i^{\thm}$ subspace $\scal_i,~i \in [a]$ which contains $t$ vectors over $\Fb_q^l$ and these $t$ linearly independent vectors form the columns of the sub-matrix $\Sb_i$. After constructing such $\Sb$ matrix, we set $\Hb=\Sb\Ab^S$ which is a valid encoder matrix for the $(\Ab^S,1)$ function update problem with $t \geq 1$. Using this code construction we achieve codelength $l \geq 2t$ for $(\Ab^S,1)$ function update problem if $q \geq a^{\frac{1}{l-t}}$.
\end{const} 
\begin{example}
Consider an $(\Ab^S,\ei)$ function update problem over $\Fb_2$ with $\ei=1$ where $\Ab^S \in \Fb_2^{9 \times 12}$ is given by
\begin{equation*} 
\Ab^S=
\begin{bmatrix}
\Cb & \pmb{0} & \pmb{0}\\
\pmb{0} & \Cb & \pmb{0}\\
\pmb{0} & \pmb{0} & \Cb
\end{bmatrix}
\end{equation*}
where $\Cb \in \Fb_2^{3 \times 4}$ is given by
\begin{equation*} 
\Cb=
\begin{bmatrix}
1 & 0 & 0 & 1\\
0 & 1 & 0 & 1\\
0 & 0 & 1 & 1
\end{bmatrix}.
\end{equation*} 
Now our aim is to construct a matrix $\Sb=[\Sb_1~\Sb_2~\Sb_3] \in \Fb_q^{l \times 9}$ such that the subspaces $\scal_1,~\scal_2,~\scal_3$ generated by the columns of $\Sb_1,~\Sb_2$ and $\Sb_3$ respectively are pairwise trivially intersecting. From our construction we have that it is possible to find $3$ pairwise trivially intersecting $3$-dimensional subspaces of $\Fb_q^l$ if $q \geq 3^{\frac{1}{l-3}}$ and provided $l \geq 6$. If we let $l=6$ then $q \geq 3^{\frac{1}{3}}$ i.e., $q \geq 2$. Hence over $\Fb_2$ it is possible to construct a $6 \times 9$ matrix $\Sb$ such that $\Hb=\Sb\Ab^S$ is a valid encoder matrix for the $(\Ab^S,1)$ function update problem. One possible choice of $3$ pairwise trivially intersecting $3$-dimensional subspaces of $\Fb_q^6$ is $\scal_1=\text{span}\{(1,0,0,0,0,0),(0,1,0,0,0,0),(0,0,1,0,0,0)\}$, $\scal_2=\text{span}\{(0,0,0,1,0,0),(0,0,0,0,1,0),\\(0,0,0,0,0,1)\}$ and $\scal_3=\text{span}\{(1,0,0,1,0,0),(0,1,0,0,1,0),(0,0,1,0,0,1)\}$. Hence the matrix \mbox{$\Sb \in \Fb_2^{6 \times 9}$} is given by 
\begin{equation*} 
\Sb=
\begin{bmatrix}
1 & 0 & 0 & 0 & 0 & 0 & 1 & 0 & 0\\
0 & 1 & 0 & 0 & 0 & 0 & 0 & 1 & 0\\
0 & 0 & 1 & 0 & 0 & 0 & 0 & 0 & 1\\
0 & 0 & 0 & 1 & 0 & 0 & 1 & 0 & 0\\
0 & 0 & 0 & 0 & 1 & 0 & 0 & 1 & 0\\
0 & 0 & 0 & 0 & 0 & 1 & 0 & 0 & 1
\end{bmatrix}.
\end{equation*}
\end{example}               
\vspace{2mm}
II. \textbf{Case-2, $t \geq 1$, $\ei \geq 1$:} Here we provide a linear code construction for the family of $(\Ab^S,\ei)$ function update problem  where $\ei \geq 1$ and $\Ab^S$ is given in (\ref{a_struct}) with $t\geq 1$. To satisfy the condition given in (\ref{sz}) for $\ei \geq 1$ it is sufficient that the columns of any $2\ei$ or fewer sub-matrices among $\Sb_1,\Sb_2,\dots,\Sb_a$ form a linearly independent set.
\begin{const} \label{general}
Our code construction uses a linear code over $\Fb_{q^t}$ of maximum possible dimension with block length $a$ and minimum distance $d_{\mathrm{min}} \geq 2\ei+1$. Let $\hat{\Sb} \in \Fb_{q^t}^{\hat{l} \times a}$ be a parity check matrix of such linear code with $\hat{l}=a-k_{q^t}(a,2\ei+1)$ where $k_{q^t}(a,2\ei+1)$ denotes the maximum dimension of a linear code over $\Fb_{q^t}$ with block length $a$ and minimum distance $d_{\mathrm{min}} \geq 2\ei+1$. Note that any $2\ei$ columns of $\hat{\Sb}$ are linearly independent over $\Fb_{q^t}$. Let $\alpha$ be a primitive element of $\Fb_{q^t}$ and $p(x)=p_0+p_1x+p_2x^2+\dots+p_{t-1}x^{t-1}+x^t$ be the primitive polynomial corresponding to $\alpha$ where each $p_j \in \Fb_q$ for all $j \in \{0,1,\dots,t-1\}$. The corresponding companion matrix is given by 
\begin{equation*}
\Mb=
\begin{bmatrix}
0 & 0 & \dots & 0 & -p_0\\
1 & 0 & \dots & 0 & -p_1\\
0 & 1 & \dots & 0 & -p_2\\
\vdots & \vdots & \ddots & \vdots & \vdots\\
0 & 0 & \dots & 1 & -p_{t-1}
\end{bmatrix}.
\end{equation*}
Now we define a matrix $\Sb=[\Sb_1~\Sb_2~\dots~\Sb_a] \in \Fb_q^{\hat{l}t \times at}$ where for each $j \in [a],~\Sb_j \in \Fb_q^{\hat{l}t \times t}$ is given by $\Sb_j=[\Sb_{1,j}^T~\Sb_{2,j}^T~\dots~\Sb_{\hat{l},j}^T~]^T$. Now for each $i \in [\hat{l}]$ and $j \in [a]$, $\Sb_{i,j} \in \Fb_q^{t \times t}$ is given by 
\begin{equation} \label{S_S'}
\Sb_{i,j}= \left\{ \begin{array}{ccl}
\pmb{0}_{t \times t} & \mbox{if} & \hat{s}_{i,j}=0 \\ 
\Ib_{t \times t} & \mbox{if} & \hat{s}_{i,j}=1 \\ 
\Mb^k & \mbox{if} & \hat{s}_{i,j}=\alpha^k,~k \in \{1,2\dots,q^t-2\}
\end{array} \right.
\end{equation}
where $\hat{s}_{i,j}$ is the $(i,j)^\thm$ entry of $\hat{\Sb}$. Since any $2\ei$ or fewer columns of $\hat{\Sb}$ are linearly independent then using Theorem~3 in \cite{EW_2018} we have that the columns of any $2\ei$ or fewer block matrices among $\Sb_1,\Sb_2,\dots,\Sb_a$ are linearly independent. Hence the matrix $\Hb=\Sb\Ab^S$ is a valid encoder matrix over $\Fb_q$ with codelength $l=\hat{l}t=t(a-k_{q^t}(a,2\ei+1))$. Since any $2\ei$ or fewer columns of $\hat{\Sb}$ are linearly independent we have $\hat{l} \geq 2\ei$ and hence $l \geq 2\ei t$ with equality if and only if $\hat{\Sb}$ is a parity check matrix of an $[a,a-2\ei,2\ei+1]$ MDS code over $\Fb_{q^t}$. Such an MDS code is guaranteed to exist if $q^t \geq a$. Hence using this code construction we achieve codelength $l=2\ei t$ if $q \geq a^{\frac{1}{t}}$.
\end{const}
\begin{example}
Consider an $(\Ab^S,\ei)$ function update problem over $\Fb_2$ with $\ei=2$ where $\Ab^S \in \Fb_2^{15 \times 20}$ is given by
\begin{equation*} 
\Ab^S=
\begin{bmatrix}
\Cb & \pmb{0} & \pmb{0} & \pmb{0} & \pmb{0}\\
\pmb{0} & \Cb & \pmb{0} & \pmb{0} & \pmb{0}\\
\pmb{0} & \pmb{0} & \Cb & \pmb{0} & \pmb{0}\\
\pmb{0} & \pmb{0} & \pmb{0} & \Cb & \pmb{0}\\
\pmb{0} & \pmb{0} & \pmb{0} & \pmb{0} & \Cb
\end{bmatrix}
\end{equation*}
where $\Cb \in \Fb_2^{3 \times 4}$ is given by
\begin{equation*} 
\Cb=
\begin{bmatrix}
1 & 0 & 0 & 1\\
0 & 1 & 0 & 1\\
0 & 0 & 1 & 1
\end{bmatrix}.
\end{equation*} 
Note that $x^3+x+1$ is a primitive polynomial corresponding to $\Fb_8$ and companion matrix corresponding to the primitive polynomial $x^3+x+1=0$ is given by
\begin{equation*} 
\Mb=
\begin{bmatrix}
0 & 0 & 1\\
1 & 0 & 1\\
0 & 1 & 0 
\end{bmatrix}.
\end{equation*}
Now we set $\hat{\Sb}$ as a parity check matrix of a $[5,1,5]$ MDS code over $\Fb_8$ which is repetition code over $\Fb_8$. Hence $\hat{\Sb} \in \Fb_8^{4 \times 5}$ is given by 
\begin{equation*} 
\hat{\Sb}=
\begin{bmatrix}
1 & 0 & 0 & 0 & 1\\
0 & 1 & 0 & 0 & 1\\
0 & 0 & 1 & 0 & 1\\
0 & 0 & 0 & 1 & 1
\end{bmatrix}.
\end{equation*}
Now we obtain the matrix $\Sb \in \Fb_2^{12 \times 15}$ from $\hat{\Sb}$ using (\ref{S_S'}) as 
\begin{equation*} 
\Sb=
\begin{bmatrix}
\Ib_{3 \times 3} & 0 & 0 & 0 & \Ib_{3 \times 3}\\
0 & \Ib_{3 \times 3} & 0 & 0 & \Ib_{3 \times 3}\\
0 & 0 & \Ib_{3 \times 3} & 0 & \Ib_{3 \times 3}\\
0 & 0 & 0 & \Ib_{3 \times 3} & \Ib_{3 \times 3}
\end{bmatrix}.
\end{equation*}
Now we obtain a valid encoder matrix $\Hb=\Sb\Ab^S$ with codelength $12$ over $\Fb_2$.
\end{example}
\vspace{2mm}
\subsubsection{Comparison with the code in Remark~4 of \cite{PM_2018}} \label{compare}
Let us first briefly describe about the system model given in Remark~4 in \cite{PM_2018} using our notations. In Remark~4 of \cite{PM_2018} the authors considered transmission of $t$ updated information symbol vectors $\xb_1+\eb_1,\xb_2+\eb_2,\dots,\xb_t+\eb_t$, $\xb_i+\eb_i \in \Fb_q^K,~\forall i \in [t]$. The receiver knows coded version of each information symbol vector denoted by $\Cb\xb_1,\Cb\xb_2,\dots,\Cb\xb_t$ where $\Cb \in \Fb_q^{t \times K}$ and $\Cb\xb_i \in \Fb_q^{t},~\forall i \in [t]$ and demands updated version of the coded demands i.e., $\Cb(\xb_1+\eb_1),\Cb(\xb_2+\eb_2),\dots,\Cb(\xb_t+\eb_t)$. We can view this problem as an $(\Ab^S,\ei)$ function update problem where $\Ab^S \in \Fb_q^{t^2 \times tK}$ takes the form given in (\ref{a_struct}) with the number of repetitions of the matrix $\Cb$ along the block diagonal entries of $\Ab^S$ being equal to $t$. We denote the information symbol vector as $\xb = [\xb_1^T~\xb_2^T~\dots~\xb_t^T]^T \in \Fb_q^{tK}$ and the update vector as $\eb=[\eb_1^T~\eb_2^T~\dots~\eb_t^T]^T \in \Fb_q^{tK}$ with $\wtm(\eb) \leq \ei$. The authors of \cite{PM_2018} provide a valid code construction with codelength $2t\ei$ based on an MRSC using the Construction 1 in \cite{PM_2018}. This construction from \cite{PM_2018} is valid over any field $\Fb_q$. 

To construct a valid code for the above function update problem we choose $\hat{\Sb}$ as a parity check matrix of a $[t,t-2\ei,2\ei+1]$ MDS code over $\Fb_{q^t}$ and such a code exists if $q^t \geq t$. Then we construct the matrix $\Sb \in \Fb_q^{2t\ei \times t^2}$ from $\hat{\Sb}$ using (\ref{S_S'}). Hence if $q \geq t^{1/t}$ we construct a valid code with codelength $2t\ei$ for the $(\Ab^S,\ei)$ function update problem. Note that for any positive integer $t$, $t^{1/t} < 2$. Hence over any finite field $\Fb_q$ our construction yields a valid encoder matrix with codelength $2t\ei$ for the $(\Ab^S,\ei)$ function update problem described above.     
\vspace{2mm}
\subsubsection{Comparison of Code Construction~1 and Code Construction~2 for $(\Ab^S,1)$ function update problem with $t \geq 1$} \label{code_compare}
In this sub-section we consider the Code Construction~\ref{general} for the special case of $\ei=1$ and then compare the performance with the performance of the Code Construction~\ref{eps_1}. Consider an $(\Ab^S,1)$ function update problem where $\Ab^S$ is of the form given in (\ref{a_struct}). To obtain a valid code for the $(\Ab^S,1)$ function update problem using the Code Construction~\ref{general}, we use a linear code over $\Fb_{q^t}$ of maximum possible dimension  with blocklength $a$ and minimum distance $d_{\text{min}} \geq 3$. Let $\hat{\Sb} \in \Fb_{q^t}^{\hat{l} \times a}$ be a parity of such linear code with $\hat{l}=a-k_{q^t}(a,3)$ where $k_{q^t}(a,3)$ denotes the maximum possible dimension of a linear code over $\Fb_{q^t}$ with blocklength $a$ and minimum distance is at least $3$. We construct a matrix $\Sb \in \Fb_q^{\hat{l} t \times at}$ from $\hat{\Sb}$ using (\ref{S_S'}) and obtain a valid encoder matrix $\Hb$ with code length $\hat{l}t$ by multiplying $\Sb$ with $\Ab^S$. Note that any two or fewer columns of $\hat{\Sb}$ are linearly independent. Hence the subspace generated by each column of $\hat{\Sb}$ are pairwise trivially intersecting. Therefore to construct such a matrix $\hat{\Sb}$ it is necessary and sufficient that the number of trivially intersecting $1$-dimensional subspaces of space $\Fb_{q^t}^{\hat{l}}$ is at least $a$. From \cite{OSC_2005} we know that the space $\Fb_{q^t}^{\hat{l}}$ contains exactly $(q^{\hat{l}t}-1)/(q^t-1)$ trivially intersecting $1$-dimensional  subspaces. Hence to construct a matrix $\Sb$ it is necessary and sufficient that 
\begin{equation*}
\frac{q^{\hat{l}t}-1}{q^t-1} \geq a.
\end{equation*}
Now using the fact $(q^{\hat{l}t}-1)/(q^t-1) \geq q^{\hat{l}t}/q^t$ (since $\hat{l}t \geq t$) we observe that  $q^{\hat{l}t}/q^t \geq a$ i.e., $q \geq a^{\frac{1}{t(\hat{l}-1)}}$ is a sufficient condition for such an encoder matrix to exist. Hence applying the Code Construction~\ref{general} for an $(\Ab^S,1)$ function update problem over $\Fb_q$ we achieve codelength $l=t(a-k_{q^t}(a,3))$ if the field size $q \geq a^{\frac{1}{t(\hat{l}-1)}}$. Hence if $q \geq a^{1/t}$ we achieve codelength $l=2t$ using the Code Construction~\ref{general} by choosing a parity check matrix of an $[a,a-2,3]$ MDS code over $\Fb_{q^t}$ and such a MDS code exists over $\Fb_{q^t}$ since $q^t \geq a$. Note that we also achieve codelength $l=2t$ for $(\Ab^S,1)$ function update problem using the Code Construction~\ref{eps_1} if $q \geq a^{1/t}$. Note that in Code Construction~\ref{general}, the achieved codelength $l=\hat{l}t$ is always an integer multiple of $t$. But applying the Code Construction~\ref{eps_1} for $(\Ab^S,1)$ function update problem we can achieve any codelength $l \geq 2t$ provided the field size $q \geq a^{1/l-t}$. Hence for $(\Ab^S,1)$ function update problem the Code Construction~\ref{general} becomes a special case of the Code Construction~\ref{eps_1}. This also inspires us to study the Code Construction~\ref{eps_1} separately for $(\Ab^S,1)$ function update problem.   

\section{Equivalence with a Functional Index Coding problem} \label{FU_FIC}
In this section we discuss a variation of the classical index coding problem where each user demands a coded version of the information symbols present at the transmitter and already knows a subset of the (uncoded) information symbols as side information. This is a special case of the Generalized Index Coding problem~\cite{DSS_2014,LDH_2015} and the Functional Index Coding problem~\cite{GR_2016}. The authors of \cite{DSS_2014,LDH_2015} generalized the classical index coding problem where each receiver knows some linearly coded information symbols as side-information and demands some linearly coded information symbols. Additionally the authors of \cite{DSS_2014} assume that the information symbols present in the transmitter are also linearly coded information symbols. In \cite{GR_2016}, authors generalized the index coding problem, where the side-information as well as demanded messages can be arbitrary functions of information symbols, called \textit{functional index coding} problem. Here we consider a special case of generalized index coding problem and functional index coding problem and then we introduce the relation between function update problem and this family of functional index coding problems.
\vspace{1mm}
\subsection{Functional Index Coding with Coded Demand and Uncoded Side Information} \label{preli} Consider a broadcast network scenario with single transmitter and $\hat{K}$ receivers $u_1,u_2,\dots,u_{\hat{K}}$. The transmitter has a vector of $n$ information symbols $\xb=(x_1,x_2,\dots,x_n) \in \Fb_q^{n}$. Each receiver knows a subset of the information symbols as side-information. Let $\xb_{\xcal_i}$ be the side-information vector of $i^{\mathrm{th}}$ receiver $u_i$ where $\xcal_i \subseteq [n],~i \in [\hat{K}]$. Each receiver demands a coded version of the information symbols vector $\xb$. Let $\Ab_i\xb$ be the coded demand of $i^{\thm}$ receiver $u_i$ where $\Ab_i \in \Fb_q^{m \times n}$ with rank$(\Ab_i)=m$. Upon denoting $\acal=(\Ab_1, \Ab_2, \dots, \Ab_{\hat{K}})$ and $\xcal=(\xcal_1,\xcal_1,\dots,\xcal_{\hat{K}})$ we describe the problem instance as $(\hat{K},n,\xcal,\acal)$ \textit{functional index coding problem}. A valid \emph{encoding function} $\mathfrak{E}_{\cd}$ over $\Fb_q$ for an $(\hat{K},n,\xcal,\acal)$ functional index coding problem is  
\begin{equation*}
\mathfrak{E}_{\cd}:\Fb_q^n \rightarrow \Fb_q^l
\end{equation*}
such that for each receiver $u_i,$ \mbox{$i \in [\hat{K}]$} there exists a \emph{decoding function}
\mbox{$\mathfrak{D}_{i,\cd}:\Fb_q^l \times \Fb_q^{|\xcal_i|} \rightarrow \Fb_q^m$}
satisfying the following property:
$\mathfrak{D}_{i,\cd}(\mathfrak{E}_{\cd}(\xb),\xb_{\xcal_i})=\Ab_i \xb$
for every \mbox{$\xb \in \Fb_q^n$}.

The design objective is to design a tuple $(\mathfrak{E}_{\cd},\mathfrak{D}_{1,\cd},\mathfrak{D}_{2,\cd},\dots, \mathfrak{D}_{\hat{K},\cd})$ of encoding and decoding functions that minimizes the codelength $l$ and determine the \emph{optimal codelength} for the given functional index coding problem which is the minimum codelength among all coding schemes. 

A linear code for an $(\hat{K},n,\xcal,\acal)$ functional index coding problem is defined as a coding scheme where the encoding function \mbox{$\mathfrak{E}_{\cd}:\Fb_q^n \rightarrow \Fb_q^l$} is a linear transformation over $\Fb_q$ described as \mbox{$\mathfrak{E}_{\cd}(\xb)=\Hb\xb$}, where $\Hb \in \Fb_q^{l \times n}$ is the \emph{encoder matrix} for linear functional index code. The minimum codelength among all valid linear coding schemes for the $(\hat{K},n,\xcal,\acal)$ functional index coding problem over the field $\Fb_q$ will be denoted as $l_{q,\optm,\cd}$.

Now we derive a design criterion for a matrix $\Hb$ to be a valid encoder matrix for $(\hat{K},n,\xcal,\acal)$ functional index coding problem. We define the set $\ical_{\cd}(\hat{K},n,\xcal,\acal)$, or equivalently $\ical_{\cd}$, of vectors $\yb$ of length $n$ such that $\yb_{\xcal_i}={\bf{0}} \in \Fb_q^{|\xcal_i|}$ and $\Ab_i\yb \neq \pmb{0}$ for some choice of $i \in [\hat{K}]$ i.e.,
\begin{equation} \label{i_cd}
\ical_{\cd}(\hat{K},n,\xcal,\acal)=\bigcup\limits_{i=1}^{\hat{K}}\{\yb \in \Fb_q^n~|~\yb_{\xcal_i}={\bf{0}}~\text{and}~\Ab_i\yb \neq \pmb{0}\}.
\end{equation}
\begin{theorem} \label{thm5}
The matrix $\Hb \in \Fb_q^{l \times n}$ is a valid encoder matrix for the $(\hat{K},n,\xcal,\acal)$ functional index coding problem if and only if 
\begin{equation*}
\Hb\yb \neq \pmb{0},~~~\forall \yb \in \ical_{\cd}.
\end{equation*}
\end{theorem}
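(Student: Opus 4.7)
The plan is to mimic the structure of the proof of Theorem~\ref{thm1}, but in a simpler form since here we only require $\Hb\yb \neq \pmb{0}$ rather than dealing with the intersection code $\csr = \csr_A \cap \csr_H$. I would prove both directions by the standard ``indistinguishable message pairs'' argument: a linear encoder fails for receiver $u_i$ precisely when there exist two messages with identical side information at $u_i$ and identical codewords, but distinct demands at $u_i$; writing $\yb = \xb - \xb'$ for such a pair converts this statement into the condition that $\yb \in \ical_{\cd}$ and $\Hb\yb = \pmb{0}$ simultaneously.

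For the necessity direction, I would assume for contradiction that some $\yb \in \ical_{\cd}$ satisfies $\Hb\yb = \pmb{0}$, pick the index $i \in [\hat{K}]$ witnessing $\yb \in \ical_{\cd}$ (so that $\yb_{\xcal_i} = \pmb{0}$ and $\Ab_i\yb \neq \pmb{0}$), and let $\xb$ be arbitrary and $\xb' = \xb + \yb$. Then $\xb'_{\xcal_i} = \xb_{\xcal_i}$ and $\Hb\xb' = \Hb\xb$, so receiver $u_i$ receives the same codeword and has the same side information under either message; yet $\Ab_i \xb' \neq \Ab_i \xb$. Hence no decoding function $\mathfrak{D}_{i,\cd}$ can output the correct demand for both, contradicting validity of $\Hb$.

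For the sufficiency direction, I would argue by construction of the decoder. Fix $i \in [\hat{K}]$ and consider any two messages $\xb, \xb' \in \Fb_q^n$ with $\xb_{\xcal_i} = \xb'_{\xcal_i}$ and $\Hb\xb = \Hb\xb'$. Setting $\yb = \xb - \xb'$ gives $\yb_{\xcal_i} = \pmb{0}$ and $\Hb\yb = \pmb{0}$. The hypothesis $\Hb\yb \neq \pmb{0}$ for every $\yb \in \ical_{\cd}$ forces $\yb \notin \ical_{\cd}$, so from the definition~(\ref{i_cd}) we obtain $\Ab_i\yb = \pmb{0}$, i.e., $\Ab_i\xb = \Ab_i\xb'$. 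Thus the value $\Ab_i\xb$ is a well-defined function of the pair $(\Hb\xb, \xb_{\xcal_i})$, which defines a valid decoder $\mathfrak{D}_{i,\cd}$. Doing this for each $i \in [\hat{K}]$ gives the required tuple of decoders.

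I do not expect any serious obstacle: the argument is a direct translation of the standard difference-vector characterization used in index coding, adapted to the fact that here side information is uncoded (forcing $\yb_{\xcal_i} = \pmb{0}$ rather than a more general linear constraint) and the demand is linear (so distinguishability reduces to $\Ab_i\yb \neq \pmb{0}$). The only mild point to handle carefully is that $\ical_{\cd}$ is a union over $i \in [\hat{K}]$, so in the sufficiency step one must be sure to conclude $\Ab_i\yb = \pmb{0}$ for the \emph{specific} index $i$ under consideration, which is immediate since failure at index $i$ would itself place $\yb$ into $\ical_{\cd}$.
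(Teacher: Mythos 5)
Your proposal is correct and follows essentially the same route as the paper's proof: the standard difference-vector argument, setting $\yb=\xb-\xb'$ and observing that decodability at receiver $u_i$ fails exactly when some $\yb$ with $\yb_{\xcal_i}=\pmb{0}$ and $\Ab_i\yb\neq\pmb{0}$ lies in the kernel of $\Hb$. Your version merely makes the two directions and the construction of the decoder more explicit than the paper's chain of equivalences, and your handling of the union over $i$ in the sufficiency step is exactly right.
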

\begin{proof}
A matrix $\Hb \in \Fb_q^{l \times n}$ is a valid encoder matrix for the $(\hat{K},n,\xcal,\acal)$ functional index coding problem if and only if at each receiver $u_i,~i \in [\hat{K}]$, $\Ab_i\xb$ can be uniquely determined from the received codeword $\Hb\xb$ and the side information $\xb_{\xcal_i}$. Hence for two distinct pair of the information symbol vectors $(\xb,\xb')$ such that the side-information symbol vectors available at the $i^{\text{th}}$ receiver are identical i.e., $\xb_{\xcal_i}=\xb'_{\xcal_i}$ but demanded coded information symbol vectors are distinct i.e., $\Ab_i\xb \neq \Ab_i\xb'$ then the transmitted codeword $\Hb\xb$ must be distinct from $\Hb\xb'$ to distinguish two different demanded coded information symbol vectors. Equivalently, the condition $\Hb\xb \neq \Hb\xb'$ should hold for every pair $\xb,\xb' \in \Fb_q^n$ such that $\Ab_i\xb \neq \Ab_i\xb'$ and $\xb_{\xcal_i}=\xb'_{\xcal_i}$ for some $i \in [K]$. Therefore $\Hb$ is a valid encoder matrix if and only if
\begin{equation*}
\Hb(\xb-\xb') \neq \pmb{0}
\end{equation*}
for all $\xb,\xb' \in \Fb_q^n$ such that $\Ab_i\xb \neq \Ab_i\xb'$ and $\xb_{\xcal_i}=\xb'_{\xcal_i}$ for some $i \in [\hat{K}]$. Now denoting $\yb=\xb-\xb'$ we have
\begin{equation*} 
\Hb\yb \neq \pmb{0}
\end{equation*}
for all $\yb \in \Fb_q^n$ such that $\Ab_i\yb \neq \pmb{0}$ and $\yb_{\xcal_i} = \pmb{0}$ for some $i \in [\hat{K}]$. Hence the statement of the theorem follows.
\end{proof}
\vspace{1mm}
\subsection{Construction of a Equivalent Functional Index Coding Problem from a given Function Update problem} \label{FIC_const}
Now we construct an $(\hat{K},n,\xcal,\acal)$ functional index coding problem starting from an $(\Ab,\ei)$ function update problem. The number of receivers $\hat{K}$, the tuple of the side information indices $\xcal$ and the tuple of coded demands $\acal$ are obtained from Algorithm~1.

\begin{algorithm}[h!]
\caption{Construction of an functional index coding problem from a given function update problem}
\SetAlgoLined
\textbf{Input}: $\Ab \in \Fb_q^{m \times n},\ei$ corresponding to an $(\Ab,\ei)$ function update problem\\
\textbf{Output}: $\hat{K},\xcal,\acal$ corresponding to a functional index coding problem\\
   \% \% Iteration: \\
   $j=0$ \\
  \For{each $Q \subseteq [n]=\{1,2,\dots,n\}$ with $|Q|=\min(2\ei,n)$}{
 $j \leftarrow j+1$\\
 $\xcal_j \leftarrow [n] \setminus Q$\\
 $\Ab_j \leftarrow \Ab$
 }
 \eIf{$n > 2\ei$}{
 $\hat{K}=\binom{n}{2\ei}$}
 {$K=n$}
 $\xcal=(\xcal_1,\xcal_2,\dots,\xcal_{\hat{K}})$\\
 $\acal=(\Ab_1,\Ab_2,\dots,\Ab_{\hat{K}})$
\end{algorithm}
Algorithm~1 considers every possible choice of $Q \subseteq [n]$ such that $|Q|=\min(2\ei,n)$ and defines a new user $u_j$ in the functional index coding problem with demand matrix $\Ab_j=\Ab$ and side information $\xcal_j=[n] \setminus Q$.

Now we relate the set $\mathcal{I}(\Ab,\ei)$ defined for the $(\Ab,\ei)$ Function  Update problem and the set $\ical_{\cd}$ defined in (\ref{i_cd}) for the constructed $(\hat{K},n,\xcal,\acal)$ functional index coding problem.

\begin{theorem}  \label{i_fu_i_cd}
For any given $(\Ab,\ei)$ function update problem and its corresponding $(\hat{K},n,\xcal,\acal)$ functional index coding problem, $\mathcal{I}(\Ab,\ei)=\ical_{\cd}(\hat{K},n,\xcal,\acal)$.   
\end{theorem}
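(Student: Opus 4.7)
The plan is to establish the set equality by proving both inclusions directly from the definitions, using the fact that Algorithm~1 generates exactly one user $u_j$ for each subset $Q \subseteq [n]$ of size $\min(2\ei, n)$, with $\xcal_j = [n] \setminus Q$ and $\Ab_j = \Ab$. Under the standing assumption $m > 2\ei$ combined with $m \leq n$, we always have $n > 2\ei$, so $|Q| = 2\ei$ for every generated user; the argument also works in the degenerate case $|Q| = n$ with only cosmetic changes.

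For the forward direction $\ical(\Ab, \ei) \subseteq \ical_{\cd}(\hat{K}, n, \xcal, \acal)$, I would take any $\yb \in \ical(\Ab, \ei)$, so that $\Ab\yb \neq \pmb{0}$ and $0 < \wtm(\yb) \leq 2\ei$. Since the support of $\yb$ has cardinality at most $2\ei \leq n$, there exists a subset $Q \subseteq [n]$ of size exactly $\min(2\ei, n)$ that contains $\mathrm{supp}(\yb)$. By construction of Algorithm~1, this $Q$ corresponds to some user index $j \in [\hat{K}]$ with $\xcal_j = [n] \setminus Q$ and $\Ab_j = \Ab$. Then the coordinates of $\yb$ indexed by $\xcal_j$ vanish, i.e.\ $\yb_{\xcal_j} = \pmb{0}$, while $\Ab_j \yb = \Ab\yb \neq \pmb{0}$, placing $\yb$ in the $j$-th set of the union defining $\ical_{\cd}$.

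For the reverse direction $\ical_{\cd}(\hat{K}, n, \xcal, \acal) \subseteq \ical(\Ab, \ei)$, I would take any $\yb \in \ical_{\cd}$ and pick an index $j$ such that $\yb_{\xcal_j} = \pmb{0}$ and $\Ab_j \yb \neq \pmb{0}$. Because $\xcal_j = [n] \setminus Q_j$ with $|Q_j| = \min(2\ei, n)$, the condition $\yb_{\xcal_j} = \pmb{0}$ forces $\mathrm{supp}(\yb) \subseteq Q_j$, hence $\wtm(\yb) \leq \min(2\ei, n) \leq 2\ei$. The condition $\Ab_j \yb = \Ab\yb \neq \pmb{0}$ then gives both $\Ab\yb \neq \pmb{0}$ and $\yb \neq \pmb{0}$, the latter ensuring $\wtm(\yb) \geq 1$. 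Therefore $\yb \in \ical(\Ab, \ei)$.

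There is no substantive obstacle: the proof is a direct unwinding of the two definitions, and the only thing to notice is the equivalence between ``$\yb_{\xcal_j} = \pmb{0}$'' (phrased in terms of coordinates outside the support) and ``$\mathrm{supp}(\yb) \subseteq Q_j$'' (phrased as a weight bound), together with the fact that Algorithm~1 exhausts all supports of size up to $2\ei$. Once the two inclusions are written down this way, the theorem follows immediately.
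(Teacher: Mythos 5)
Your proof is correct and follows essentially the same two-inclusion argument as the paper: in both directions you match the support of $\yb$ to a subset $Q$ of size $\min(2\ei,n)$ enumerated by Algorithm~1 and use $\Ab_j=\Ab$ to transfer the condition $\Ab\yb\neq\pmb{0}$. The only cosmetic difference is that you phrase the weight bound via $\mathrm{supp}(\yb)\subseteq Q_j$ where the paper writes $\wtm(\yb)=\wtm(\yb_{\xcal_j})+\wtm(\yb_{[n]\setminus\xcal_j})\leq 2\ei$; these are the same observation.
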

\begin{proof}
To show that $\mathcal{I}(\Ab,\ei)=\ical_{\cd}(\hat{K},n,\xcal,\acal)$, we will show that $\mathcal{I}(\Ab,\ei) \subseteq \ical_{\cd}$ and $\ical_{\cd} \subseteq \mathcal{I}(\Ab,\ei)$.

\vspace{2mm}
\textit{Proof for} $\mathcal{I}(\Ab,\ei) \subseteq \ical_{\cd}$:
Suppose a vector $\yb \in \mathcal{I}(\Ab,\ei)$. Then from (\ref{ical_fu}), we have $\Ab\yb \neq \pmb{0}$ and $0 < \wtm(\yb) \leq 2\ei$. Hence there exists a $Q \subseteq [n]$ such that $|Q|=\min(2\ei,n)$ and $\yb_{[n] \setminus Q}= \mathbf{0}$. Now using the construction procedure described in Algorithm~1 we see that there exists a user $u_j$ in the constructed functional index coding problem such that $\xcal_j=[n] \setminus Q$ and $\Ab_j=\Ab$. The vector $\yb$ satisfies $\yb_{\xcal_j}=\pmb{0}$ and $\Ab_j \yb \neq \pmb{0}$. Hence $\yb \in \ical_{\cd}$.

\vspace{2mm}
\textit{Proof for} $\ical_{\cd} \subseteq \mathcal{I}(\Ab,\ei)$: Suppose a vector $\yb \in \ical_{\cd}$. Then there exists at least one user $j \in [\hat{K}]$ such that $\Ab_j\yb \neq \pmb{0}$ and $\yb_{\xcal_j}=\mathbf{0}$. Since $\Ab_j\yb \neq \pmb{0}$ we have $\yb \neq \pmb{0}$. From Algorithm~1 we see that for any $j \in [\hat{K}]$, $|\xcal_j|= n-\min(2\ei,n)$. Note that $\wtm(\yb)=\wtm(\yb_{\xcal_j}) + \wtm(\yb_{[n]\setminus \xcal_j}) \leq 2\ei$. Again from the construction we have $\Ab_i=\Ab,~\forall j \in [K]$. Therefore $\Ab\yb \neq \pmb{0}$. Hence $\yb \in \mathcal{I}(\Ab,\ei)$.

Hence the theorem holds. 
\end{proof}

Now we relate the problem of constructing linear codes for function update problem to the problem of designing linear coding scheme for the corresponding functional index coding problem.
\begin{theorem}  \label{fu_cd_H}
A matrix $\Hb\in \Fb_q^{l \times n}$ such that $\Hb=\Sb\Ab$ for some matrix $\Sb \in \Fb_q^{l \times m}$ is a valid encoder matrix for the $(\Ab,\ei)$ function update problem if and only if $\Hb$ is a valid encoder matrix for the $(\hat{K},n,\xcal,\acal)$ functional index coding problem.  
\end{theorem}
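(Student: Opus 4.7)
The proof proposal is essentially to observe that all the heavy lifting has already been done: Theorem~\ref{thm2} gives a clean algebraic criterion for a matrix of the form $\Hb = \Sb\Ab$ to be a valid encoder for the $(\Ab,\ei)$ function update problem, namely $\Hb\yb \neq \pmb{0}$ for every $\yb \in \ical(\Ab,\ei)$. Theorem~\ref{thm5} gives the analogous criterion on the functional index coding side, namely $\Hb\yb \neq \pmb{0}$ for every $\yb \in \ical_{\cd}(\hat{K},n,\xcal,\acal)$. Theorem~\ref{i_fu_i_cd}, just proved above, asserts that these two sets are literally the same subset of $\Fb_q^n$ under the construction of Algorithm~1. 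So the plan is to chain these three facts into a short biconditional argument.

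Concretely, I would proceed in two short directions. For the forward direction, assume $\Hb = \Sb\Ab$ is valid for the $(\Ab,\ei)$ function update problem; by Theorem~\ref{thm2} this is equivalent to $\Hb\yb \neq \pmb{0}$ for all $\yb \in \ical(\Ab,\ei)$. Rewriting $\ical(\Ab,\ei) = \ical_{\cd}(\hat{K},n,\xcal,\acal)$ via Theorem~\ref{i_fu_i_cd}, this is exactly the criterion in Theorem~\ref{thm5} for $\Hb$ to be a valid encoder matrix for the constructed $(\hat{K},n,\xcal,\acal)$ functional index coding problem. The reverse direction is the same chain read backwards: a valid FIC encoder $\Hb$ satisfies $\Hb\yb \neq \pmb{0}$ on $\ical_{\cd}$, hence on $\ical(\Ab,\ei)$, hence $\Hb = \Sb\Ab$ is a valid FU encoder by Theorem~\ref{thm2}.

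There is essentially no obstacle, since all the technical work is already absorbed into Theorems~\ref{thm2}, \ref{thm5}, and \ref{i_fu_i_cd}. The only thing worth flagging in the write-up is that the hypothesis $\Hb = \Sb\Ab$ is used only to bring Theorem~\ref{thm2} into play on the FU side; Theorem~\ref{thm5} on the FIC side imposes no such structural requirement on $\Hb$, so the biconditional is genuinely a statement about the same matrix $\Hb$ (already assumed to factor through $\Ab$) serving both roles. No additional lemmas, field-size assumptions, or calculations are needed, and the proof will be only a few lines.
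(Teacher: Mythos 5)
Your proposal is correct and is essentially identical to the paper's own proof: both chain Theorem~\ref{thm2}, the set equality $\ical(\Ab,\ei)=\ical_{\cd}(\hat{K},n,\xcal,\acal)$ from Theorem~\ref{i_fu_i_cd}, and the criterion of Theorem~\ref{thm5} to obtain the biconditional in a few lines. Your remark that the hypothesis $\Hb=\Sb\Ab$ is needed only to invoke Theorem~\ref{thm2} on the function update side is an accurate and worthwhile clarification.
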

\begin{proof}
From Theorem~\ref{thm2} we know that $\Hb$ is a valid encoder matrix for the $(\Ab,\ei)$ function update problem if and only if it satisfies
\begin{equation*}
\Hb\yb \neq {\bf{0}},\quad \forall \yb \in \mathcal{I}(\Ab,\ei).
\end{equation*}
Now from Theorem \ref{i_fu_i_cd} we have $\mathcal{I}(\Ab,\ei) = \ical_{\cd}(\hat{K},n,\xcal,\acal)$.
Therefore using Theorem~\ref{thm5} we conclude that $\Hb$ is a valid encoder matrix for the $(\hat{K},n,\xcal,\acal)$ functional index coding problem if and only if $\Hb$ is a valid encoder matrix for the $(\Ab,\ei)$ function update problem.
\end{proof}

\section*{Acknowledgment}
The authors thank Dr V.\ Lalitha for discussions regarding the topic of this paper.




\end{document}